\documentclass[a4paper,english]{article}

\usepackage[margin=2.68cm]{geometry}
\overfullrule=5mm

\usepackage{amsthm}
\usepackage{amsmath}
\usepackage{graphicx}
\usepackage{amssymb}
\usepackage{bold-extra}
\usepackage{amsthm}
\usepackage{algorithm, algpseudocode}
\usepackage{color}
\usepackage{fullpage}
\usepackage{complexity}
\usepackage{framed}
\usepackage{float}
\usepackage{subcaption}
\usepackage[hidelinks]{hyperref}

\usepackage[table,xcdraw]{xcolor}
\usepackage{hhline}

\usepackage{microtype}


\newtheorem{theorem}{Theorem}[section]
\newtheorem{lemma}[theorem]{Lemma}
\newtheorem{corollary}[theorem]{Corollary}

\newtheorem*{definition}{Definition}

\bibliographystyle{plainurl}

\title{Deleting edges to restrict the size of an epidemic \\
in temporal networks\footnote{Kitty Meeks was supported 
by a Personal Research Fellowship from the Royal Society of Edinburgh, funded by the Scottish Government. 
George B.~Mertzios and Viktor Zamaraev were partially supported by the EPSRC Grant EP/P020372/1.
The results of this paper previously appeared as an extended abstract in the proceedings of the 
44th International Symposium on Mathematical Foundations of Computer Science, MFCS 2019
\cite{EnrightMMZ19}.}}

\author{Jessica Enright\thanks{School of Computing Science, University of Glasgow, UK. 
Email: \texttt{jessica.enright@glasgow.ac.uk}}
\and
Kitty Meeks\thanks{School of Computing Science, University of Glasgow, UK. 
Email: \texttt{kitty.meeks@glasgow.ac.uk}}
\and
George B. Mertzios\thanks{Department of Computer Science, Durham University, UK. 
Email: \texttt{george.mertzios@durham.ac.uk}}
\and
Viktor Zamaraev\thanks{Department of Computer Science, University of Liverpool, UK. 
Email: \texttt{viktor.zamaraev@liverpool.co.uk}} \ 
\thanks{The main part of this paper was prepared while the author was affiliated at the Department of Computer Science, Durham University, UK.}}


\DeclareMathOperator{\reach}{reach}
\DeclareMathOperator{\true}{TRUE}

\DeclareMathOperator{\tw}{tw}

\DeclareMathOperator{\pairs}{\rho}
\DeclareMathOperator{\opt}{opt}

\newcommand{\tempEdgeDel}{\textsc{TR Edge Deletion}}

\newcommand{\abTempEdgeDel}{\textsc{$(\alpha,\beta)$-TR Edge Deletion}}
\newcommand{\minTempEdgeDel}{\textsc{Min TR Edge Deletion}}
\newcommand{\minabTempEdgeDel}{\textsc{Min $(\alpha,\beta)$-TR Edge Deletion}}


\newcommand{\timeFunc}{\lambda}
\newcommand{\timeEdges}{\Lambda}

\newcommand{\lab}{\sigma}

\usepackage{float}
\usepackage{cite}

\begin{document}

\maketitle

\begin{abstract}
Spreading processes on graphs are a natural model for a wide variety of real-world phenomena, including information spread over social networks and biological diseases spreading over contact networks.  Often, the networks over which these processes spread are dynamic in nature, and can be modeled with temporal graphs. Here, we study the problem of deleting edges from a given temporal graph in order to reduce the number of vertices (temporally) reachable from a given starting point. This could be used to control the spread of a disease, rumour, etc. in a temporal graph. In particular, our aim is to find a temporal subgraph in which a process starting at any single vertex can be transferred to only a limited number of other vertices using a temporally-feasible path. We introduce a natural edge-deletion problem for temporal graphs and provide positive and negative results on its computational complexity and approximability.

\end{abstract}

\section{Introduction and motivation}
A temporal graph is, loosely speaking, a graph that changes with time. 
A great variety of modern and traditional networks can be modeled as temporal graphs; 
social networks, wired or wireless networks which may change dynamically, 
transportation networks, 
and several physical systems are only a few examples of networks that change 
over time~\cite{Holme-Saramaki-book-13,michailCACM}. 
Due to its vast applicability in many areas, this notion of temporal graphs 
has been studied from different perspectives under various names 
such as \emph{time-varying}~\cite{FlocchiniMS09,TangMML10-ACM,krizanc1}, 
\emph{evolving}~\cite{xuan,Ferreira-MANETS-04,clementi}, 
\emph{dynamic}~\cite{GiakkoupisSS14,CasteigtsFloccini12}, 
and \emph{graphs over time} \cite{Leskovec-Kleinberg-Faloutsos07}; 
for an attempt to integrate  existing models, concepts, and results 
from the distributed computing perspective see the 
survey papers~\cite{CasteigtsFloccini12,flocchini1,flocchini2} and the references therein. 
Mainly motivated by the fact that, due to causality, entities and information in temporal graphs can
``flow'' only along sequences of edges whose time-labels are increasing, 
most temporal graph parameters and optimisation problems that have
been studied so far are based on the notion of temporal paths (see Definition~\ref{def:temporal-path} below) 
and other path-related notions, such as temporal analogues of distance, diameter,
reachability, exploration, and centrality~\cite{AkridaMNRSZ19,akrida,erlebach,enright2019changing,MertziosMCS13,michailTSP,akridaTOCS}. 
Recently, non-path temporal graph problems have also been addressed theoretically, including
for example temporal variations of vertex cover~\cite{akrida2020temporal}, vertex coloring \cite{mertzios2019sliding}, 
matching \cite{mertzios2019computing}, and maximal cliques~\cite{viardClique, viardCliqueTCS,neidermeier}.

We adopt a simple  model for such time-varying networks, in which the vertex set remains unchanged while each edge is equipped with a set of time-labels. 
This formalism originates in the foundational work of Kempe et al.~\cite{kempe}.

\begin{definition}[temporal graph]
\label{temp-graph-def} A \emph{temporal graph} is a pair $(G,\lambda)$,
where $G=(V,E)$ is an underlying (static) graph and $\lambda :E\rightarrow
2^{\mathbb{N}}$ is a \emph{time-labeling} function which assigns to every
edge of $G$ a set of discrete-time labels.
\end{definition}

Throughout this paper we restrict our attention to graphs in which every edge is \emph{active} at exactly one time, so that $\lambda(e)$ is a singleton set for every $e \in E(G)$; abusing notation slightly, we will write $\lambda(e) = t$ to indicate that the edge $e$ is (only) present at time $t$.



Spreading processes on networks or graphs are a topic of significant research across network science \cite{barratBook}, and a variety of application areas \cite{fmdAftermath,rumours}, as well as inspiring more theoretical algorithmic work \cite{firefighter}. Part of the motivation for this interest is the usefulness of spreading processes for modelling a variety of natural phenomena, including biological diseases spreading over contact networks, rumours or news (both fake and real) spreading over information-passing networks, memes and behaviours, etc. The rise of quantitative approaches in modelling these phenomena is supported by the increasing number and size of network datasets that can be used as denominator graphs on which processes can spread (e.g. human mobility and contact networks \cite{copenhagen}, agricultural trade networks \cite{cts}, and social networks \cite{snapnets}).  Typically, a vertex in one of these networks represents some entity that has a state in the process (for example, being infected with a disease, or holding a belief), and edges represent contacts over which the state can spread to other vertices.  

Our work is partly motivated by the need to control contagion (be it biological or informational) that may spread over contact networks.  Data specifying timed contacts that could spread an infectious disease are recorded in a variety of settings, including movements of humans via commuter patterns and airline flights \cite{colizza2015}, and fine-grained recording of livestock movements between farms in most European nations \cite{mitchell2005characteristics}.  There is very strong evidence that these networks play a critical role in large and damaging epidemics, including the 2009 H1N1 influenza pandemic \cite{brockmann} and the 2001 British foot-and-mouth disease epidemic \cite{fmdAftermath}.   Because of the key importance of timing in these networks to their capacity to spread disease, methods to assess the susceptibility of temporal graphs and networks to disease incursion have recently become an active area of work within network epidemiology in general, and within livestock network epidemiology in particular \cite{noremarkWidgren,vittoriaPRX,braunstein,valdanoEpi}.

Here, similarly to~\cite{enright2018deleting}, we focus our attention on deleting edges from $(G,\lambda)$ in order to limit the temporal connectivity of the remaining 
temporal subgraph. To this end, the following temporal extension of the notion of a path in a static 
graph is fundamental~\cite{kempe,MertziosMCS13}.

\begin{definition}[Temporal path]
\label{def:temporal-path}
A \emph{ temporal path} from $u$ to $v$ in a  temporal graph $(G,\timeFunc)$ is a  path from $u$ to $v$ in $G$, 
composed of edges $e_0,e_1,\ldots,e_k$ such that each edge $e_i$ is assigned a time $t(e_i)\in\timeFunc(e_i)$, where $t(e_i)< t(e_{i+1})$ for $0 \leq i < k$.
\end{definition}

\paragraph*{Our contribution.}

We consider a natural deletion problem for temporal graphs, namely \textsc{Temporal Reachability Edge Deletion} (for short, \tempEdgeDel), as well as its optimisation version, and study its computational complexity, 
both in the traditional and the parameterised sense, subject to natural parameters. 
Given a temporal graph $(G,\timeFunc)$ and two natural numbers $k,h$, 
the goal is to delete at most $k$ edges from $(G,\timeFunc)$ such that, for every vertex $v$ of $G$, 
there exists a temporal path to at most $h-1$ other vertices. 

In Section \ref{sec:NPhard}, we show that \tempEdgeDel\ is NP-complete, even on a very restricted class of graphs.  We give two different reductions.  The first shows that, assuming the Exponential Time Hypothesis, we cannot improve significantly on a brute-force approach when considering how the running-time depends on the input size and the number of permitted deletions.  The second demonstrates that \tempEdgeDel\ is \emph{para-NP-hard} 
(i.e.~NP-hard even for constant-valued parameters) with respect 
to each one of the parameters $h$, 
maximum degree $\Delta_G$, 
or lifetime of $(G,\lambda)$ (i.e.~the maximum label assigned by $\lambda$ to any edge of $G$). 

In Section \ref{sec:approx}, we turn our attention to approximation algorithms for the optimisation version of the problem, \minTempEdgeDel, in which the goal is to find a minimum-size set of edges to delete.  We begin by describing a polynomial-time algorithm to compute a $h$-approximation to \minTempEdgeDel\ on arbitrary graphs, then show how similar techniques can be applied to compute a $c$-approximation 
on input graphs of cutwidth at most $c$.  
We conclude our consideration of approximation algorithms by showing that there is unlikely to be a polynomial-time algorithm to compute any constant-factor approximation in general, even on temporal graphs of lifetime two.

In Section \ref{sec:fpt}, we consider exact fixed-parameter tractable (FPT) algorithms.  Our hardness results show that the problem remains intractable when parameterised by $h$ or $\Delta_G$ alone; here we obtain an FPT algorithm by parameterising simultaneously by $h$, $\Delta_G$ and the treewidth $\tw(G)$ of the underlying (static) graph $G$.  In doing so, we demonstrate a general framework in which a celebrated result by Courcelle, concerning relational structures with bounded treewidth (see Theorem~\ref{th:courcelle}) can be applied to solve problems in temporal graphs.

Finally, in Section~\ref{sec:generalization} we consider a natural generalization of \tempEdgeDel\
by restricting the notion of a temporal path, as follows. 
Given two numbers $\alpha,\beta\in\mathbb{N}$, where $\alpha\leq \beta$, 
we require that the time between arriving at and leaving any vertex on a temporal path 
is between $\alpha$ and $\beta$; we refer to such a path as an \emph{$(\alpha,\beta)$-temporal path}.
The resulting problem, incorporating this restricted version of a temporal path, is called 
\abTempEdgeDel.  This $(\alpha,\beta)$-extension of the deletion problem is well motivated when considering the spread of disease: an upper bound $\beta$ on the permitted time between entering and leaving a vertex might 
represent the time within which an infection would be detected and eliminated (thus ensuring no 
further transmission), while a lower bound $\alpha$ might in different contexts represent either the time between an individual being infected and becoming infectious, or the minimum time individuals must spend together (i.e. in the same vertex) for there to be a non-trivial probability of disease transmission.  We show that all of our results can be generalised in a natural way to this ``clocked'' setting.

\section{Preliminaries}\label{sec:prelim}



Given a (static) graph $G$, we denote by $V(G)$ and $E(G)$ the sets of its
vertices and edges, respectively. An edge between two vertices $u$ and $v$
of $G$ is denoted by $uv$, and in this case $u$ and $v$ are said to be \emph{%
adjacent} in $G$. For a subset $S \subseteq V(G)$ we denoted by $G[S]$
the subgraph of $G$ induced by $S$.
Given a temporal graph $(G,\lambda)$, where $G=(V,E)$, the maximum label assigned by $\lambda$ to an 
edge of~$G$, called the \emph{lifetime} of $(G,\lambda )$, is denoted by $%
T(G,\lambda )$, or simply by $T$ when no confusion arises. That is, $%
T(G,\lambda )=\max \{\lambda (e):e\in E\}$. Throughout the paper we consider
temporal graphs with \emph{finite lifetime} $T$. 
Furthermore, we assume that the given labeling $\lambda $ is arbitrary, 
i.e.~$(G,\lambda )$ is given with an explicit label for every edge. 
We say that an edge \emph{$e \in E$ appears at time $t$} if $\timeFunc(e) = t$, 
and in this case we call the pair $(e,t)$ a \emph{time-edge} in $(G,\lambda)$. 
Given a subset $E' \subseteq E$, we denote by $(G,\timeFunc) \setminus E'$ 
the temporal graph $(G',\lambda')$, where $G'=(V, E\setminus E')$ and $\timeFunc'$ 
is the restriction of $\timeFunc$ to $E \setminus E'$. 

We say that a vertex $v$ is \emph{temporally reachable} from $u$ in~$(G,\timeFunc)$ 
if there exists a temporal path from $u$ to $v$. 
Furthermore we adopt the convention that every vertex $v$ is temporally reachable from itself. 
The \emph{temporal reachability set} of a vertex $u$, denoted by $\reach_{G,\timeFunc}(u)$, 
is the set of vertices which are temporally reachable from vertex $u$. 
The \textit{temporal reachability} of $u$ is the number of vertices in $\reach_{G,\timeFunc}(u)$. 
Furthermore, the \emph{maximum temporal reachability} of a temporal graph is the maximum of the 
temporal reachabilities of its vertices.



In this paper we mainly consider the following problem.

\vspace{0,1cm} \noindent \fbox{ 
\begin{minipage}{0.96\textwidth}
 \begin{tabular*}{\textwidth}{@{\extracolsep{\fill}}lr} \textsc{Temporal Reachability Edge Deletion} \ \ (\tempEdgeDel) & \\ \end{tabular*}
 
  \vspace{1.2mm}
{\bf{Input:}}  A temporal graph $(G,\timeFunc)$, and $k, h \in \mathbb{N}$.\\
{\bf{Output:}} Is there a set $E' \subseteq E(G)$, with $|E'| \leq k$, such that the maximum temporal reachability of $(G,\timeFunc) \setminus E'$ is at most $h$?
\end{minipage}} \vspace{0,3cm}

%

Note that the problem clearly belongs to NP as a set of edges acts as a certificate (the reachability set of any vertex in a given temporal graph can be computed in polynomial time~\cite{akrida,MertziosMCS13,kempe}).
It is worth noting here that the (similarly-flavored) deletion problem for 
finding small separators in temporal graphs was studied recently; 
namely the problem of removing a small number of vertices from a given temporal graph such that two 
fixed vertices become temporally disconnected~\cite{fluschnik2019temporal,zschoche2017efficiently}.



\section{Computational hardness}\label{sec:NPhard}

The main results of this section demonstrate that \tempEdgeDel\ is NP-complete even under very strong restrictions on the input.  Our first result shows that the trivial brute-force algorithm, running in time $n^{\mathcal{O}(k)}$, in which we consider all possible sets of $k$ edges to delete, cannot be significantly improved in general.


\begin{theorem}
\label{W-hard-TR-Edge-Deletion}
\tempEdgeDel\ is W[1]-hard when parameterised by the maximum number $k$ of edges that can be removed, even when the input temporal graph has the lifetime 2. 
Moreover, assuming the Exponential Time Hypothesis (ETH), there is no $f(k) \tau^{o(k)}$ time algorithm for \tempEdgeDel,
where $\tau$ is the size of the input temporal graph.
\end{theorem}
\begin{proof}
We provide a standard parameterised $m$-reduction from the following W[1]-complete problem.
\vspace{0,1cm} \noindent \fbox{ 
\begin{minipage}{0.96\textwidth}
 \begin{tabular*}{\textwidth}{@{\extracolsep{\fill}}lr} \textsc{Clique} & \\ \end{tabular*}
 
  \vspace{1.2mm}
\textbf{Input:}  A graph $G = (V,E)$.\\
\textbf{Parameter:} $r \in \mathbb{N}$.\\
\textbf{Question:} Does $G$ contain a clique on at least $r$ vertices?
\end{minipage}} \vspace{0,3cm}

First note that, without loss of generality, we may assume that~$r \geq 3$, as otherwise the problem is trivial. 
Let $(G=(V_G,E_G),r)$ be the input to an instance of \textsc{Clique}; we denote $n = |V_G|$ and $m = |E_G|$.  
We will construct an instance $((H,\timeFunc),k,h)$ of \tempEdgeDel, which is a yes-instance if and only if $(G,r)$ is a yes-instance for \textsc{Clique}. 
Note that, without loss of generality we may assume that $m > r + \binom{r}{2}$; otherwise there cannot be more than $r+3$ vertices of degree at least $r-1$ in $G$, and thus we can check all possible sets of $r$ vertices with degree at least $r-1$ in time $\mathcal{O}(r^3)$.

We begin by defining $H = (V_H,E_H)$.  The vertex set of $H$ is $V_H = \{s\} \cup V_G \cup E_G$.  The edge set is
$$E_H = \{sv: v \in V_G\} \cup \{ve: e \in E_G, v \in e\}.$$
We complete the construction of the temporal graph $(H,\timeFunc)$ by setting
\begin{equation*}
\timeFunc(e) = \begin{cases}
						1	& \text{if $e$ incident to $s$,}\\
						2	& \text{otherwise.}
				 \end{cases}
\end{equation*}
Finally, we set $k=r$ and $h = 1 + (n-r) + (m - \binom{r}{2})$.

\begin{figure}[h]
	\centering
	\includegraphics[scale=1.6]{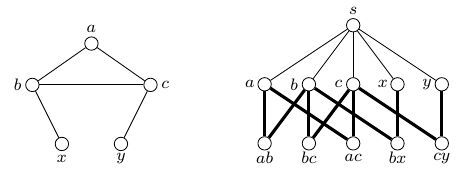}  
	\caption{\small Graph $G$ (left) and the corresponding temporal graph $(H,\timeFunc)$ (right). 
	The thin edges of $(H,\timeFunc)$ appear in time step 1, and the thick edges appear in time step 2.}
	\label{fig:W1_from_clique}
\end{figure}

We begin by observing that $s$ is the only vertex in $(H,\timeFunc)$ whose temporal reachability is more than $h$.  Note that $|\reach_{H,\timeFunc}(e)| = 3$ for all $e \in E_G$, and $|\reach_{H,\timeFunc}(v)| \leq n+1$ for all $v \in V_G$.  Thus, as
$$
h = 1 + n-r + m - \binom{r}{2} 
	> 1 + n -r + r + \binom{r}{2} - \binom{r}{2} 
	= n +1,
$$
the temporal reachability of any vertex other than $s$ is less than $h$.  Hence, we see that for any $E' \subseteq E_H$ the maximum temporal reachability of $(H,\timeFunc) \setminus E'$ is at most $h$ if and only if the temporal reachability of $s$ in the modified graph is at most $h$.

Now suppose that $G$ contains a set $U \subseteq V_G$ of $r$ vertices that induce a clique. Let $E' = \{ sv : v \in U \}$ and $(H', \timeFunc') = (H, \timeFunc) \setminus E'$.
Consider a vertex $v \in V_G$: it is clear that $v$ can only belong to $\reach_{H',\timeFunc'}(s)$ if $sv \in E_H \setminus E'$, so no element of $U$ belongs to $\reach_{H',\timeFunc'}(s)$.  Moreover, for any $e \in E_G$, any temporal path from $s$ to $e$ in $(H,\timeFunc)$ must contain precisely two edges, and so must include an endpoint of $e$; thus, for any edge $e$ with both endpoints in~$U$, we have $e \notin \reach_{H',\timeFunc'}(s)$.  Since $U$ induces a clique, there are precisely $\binom{r}{2}$ such edges.  It follows that
\begin{align*}
|\reach_{H',\timeFunc'}(s)| &\leq 1 + n + m - |U| - |\{uv \in E_G: u,v \in U\}| \\
						& = 1 + n + m - r - \binom{r}{2}\\
						& = h,
\end{align*}
as required.

Conversely, suppose that we have a set $E' \subseteq E_H$, with $|E'| \leq k = r$, such that $|\reach_{H',\timeFunc'}(s)| \leq h$, where $(H',\timeFunc') = (H,\timeFunc) \setminus E'$.  

We begin by arguing that we may assume, without loss of generality, that every element of $E'$ is incident to $s$.  Let $W \subset V_G$ be the set of vertices in $V_G$ which are incident to some element of $E'$; we claim that deleting the set of edges $E'' = \{sw: w \in W\}$ instead of $E'$ would also reduce the maximum temporal reachability of $(H,\timeFunc)$ to at most $h$.  To see this, consider a vertex $x \notin \reach_{H',\timeFunc'}(s)$.  If $x \in V_G$, then we must have $sx \in E'$, and so $sx \in E''$ implying that there is no temporal path from $s$ to $x$ when $E''$ is deleted.  If, on the other hand, $x=u_1u_2 \in E_G$, then $E'$ must contain at least one edge from each of the two temporal paths from $s$ to $x$ in $(H,\timeFunc)$, namely $su_1x$ and $su_2x$.  Hence $E'$ contains at least one edge incident to each of $u_1$ and $u_2$, so $su_1,su_2 \in E''$ and deleting all edges in $E''$ destroys all temporal paths from $s$ to $x$. 

Thus we may assume that $E' \subseteq \{sv: v \in V_G\}$.  We define $U \subseteq V_G$ to be the set of vertices in $V_G$ incident to some element of $E'$, and claim that $U$ induces a clique of cardinality $r$ in $G$.  First note that $|U| \leq r$.  Now observe that the only vertices in $V_G$ that are not temporally reachable from $s$ in $(H',\timeFunc')$ are the elements of $U$, and the only elements of $E_G$ that are not temporally reachable from $s$ are those corresponding to edges with both endpoints in $U$.  Thus, if $m'$ denotes the number of edges in $G[U]$, we have 
$$|\reach_{H',\timeFunc'}(s)| \geq 1 + n + m - |U| - m'.$$
By our assumption that this quantity is at most $h$, we see that
\begin{align*}
1 + n + m - r - \binom{r}{2} &\geq 1 + n + m - |U| - m'\\
\Leftrightarrow |U| + m' &\geq r + \binom{r}{2}.
\end{align*}
Since $|U| \leq r$, we have that $m' \leq \binom{r}{2}$, with equality if and only if $G[U]$ is a clique of size $r$.  Thus, in order to satisfy the inequality above, we must have that $|U| = r$ and that $U$ induces a clique in $G$, as required.

To prove the lower complexity bound, assume there exists a $f(k) \tau^{o(k)}$ time algorithm for \tempEdgeDel.
Then using the above reduction and the fact that the size of the temporal graph $(H,\timeFunc)$ is at most $2n^2$
we conclude that \textsc{Clique} can be solved in $f(r) n^{o(r)}$ time which is not possible unless ETH fails \cite{chen2005tight}.
\end{proof}


The W[1]-hardness reduction of Theorem~\ref{W-hard-TR-Edge-Deletion} also implies that the problem 
\tempEdgeDel\ is NP-complete, even on temporal graphs with lifetime at most two.  We note that, for temporal graphs of lifetime one, the problem is solvable in polynomial time: in this setting, the reachability set of each vertex is precisely its closed neighbourhood, so the problem reduces to that of deleting a set of at most $k$ edges so that every vertex has degree at most $h - 1$ which is solvable in polynomial time \cite[Theorem 33.4]{schrijver03}.

We now demonstrate that \tempEdgeDel\ remains NP-complete on temporal graphs of lifetime two even if the underlying graph has bounded degree and the maximum permitted size of a temporal reachability set is bounded by a constant.

\begin{theorem}\label{thm:satRed}
	\tempEdgeDel\ is NP-complete, even when the maximum temporal reachability $h$ is at most $6$ and the input temporal graph $(G, \timeFunc)$ has:
	\begin{enumerate}
		\item maximum degree $\Delta_{G}$ of the underlying graph $G$ at most 5, and
		\item lifetime at most 2.
	\end{enumerate}
	Therefore \tempEdgeDel\ is para-NP-hard with respect to the combination of parameters $h$, $\Delta_{G}$, and lifetime $T(G, \timeFunc)$.
\end{theorem}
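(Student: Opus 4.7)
The plan is to reduce from a bounded-occurrence variant of \textsc{3-SAT}, for instance the variant in which every variable occurs in at most three clauses, which is well known to remain NP-complete. Given such a formula $\varphi$ with variables $x_1,\dots,x_n$ and clauses $C_1,\dots,C_m$, I would construct a temporal graph $(G,\lambda)$ of lifetime $2$ (so that every edge carries a single label, which makes \tempEdgeDel\ and \tempEdgeLabelDel\ coincide on this instance) satisfying $\Delta_{G,\lambda}\le 5$, together with integers $k$ and $h\le 7$, such that $\varphi$ is satisfiable if and only if one can delete at most $k$ edges to reduce the maximum temporal reachability of $(G,\lambda)$ to at most $h$.

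For each variable $x_i$ I would build a small local \emph{variable gadget} containing two distinguished ``assignment edges'' $e_i^{T}$ and $e_i^{F}$, together with a constant amount of auxiliary structure chosen so that the reachability of a dedicated gadget vertex $v_i$ exceeds $h$ unless at least one of $e_i^T,\,e_i^F$ is deleted. Taking the budget to be exactly $k=n$ then forces the solver to delete precisely one edge per variable gadget, and interpreting the deletion of $e_i^T$ (resp.~$e_i^F$) as assigning $x_i$ the value \emph{true} (resp.~\emph{false}) converts any optimal deletion into a truth assignment.

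For each clause $C_j=\ell_{j,1}\vee \ell_{j,2}\vee \ell_{j,3}$ I would build a \emph{clause gadget} with a clause vertex $c_j$ connected, via short temporal routes of length at most two (as forced by the lifetime restriction), to the three ``satisfying'' literal edges for $C_j$: the edge $e_i^T$ whenever $\ell_{j,k}=x_i$ and $e_i^F$ whenever $\ell_{j,k}=\neg x_i$. The gadget is engineered so that the reachability of $c_j$ exceeds $h$ precisely when none of these three literal edges has been removed, i.e.~when every literal of $C_j$ evaluates to false under the induced assignment. Combining the variable and clause arguments, a deletion of size at most $k=n$ achieves maximum reachability $\le h$ if and only if the induced assignment satisfies every clause, which is equivalent to $\varphi$ being satisfiable. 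Since the second part of the theorem follows immediately from this construction, only these three structural bounds need to be verified.

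The main obstacle is to engineer the variable and clause gadgets so that the three restrictions (lifetime $\le 2$, $\Delta_{G,\lambda}\le 5$ and $h\le 7$) hold simultaneously. The lifetime-$2$ restriction forces every temporal path to have at most two edges, so each gadget has ``depth'' at most two, leaving very little freedom to pump reachability counts up or down by stringing together auxiliary vertices. The degree cap of $5$ in turn prevents a single variable gadget from being wired to arbitrarily many clause gadgets, which is precisely why the bounded-occurrence assumption on the source \textsc{3-SAT} instance is essential; moreover, one must verify that the dominant contributions to reachability come exclusively from the ``diagnostic'' vertices $v_i$ and $c_j$, so that every other auxiliary vertex automatically has reachability well below $h$. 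Once the gadgets are in place, the equivalence reduces to a counting argument in the style of the proof of Theorem~\ref{W-hard-TR-Edge-Deletion}: any satisfying assignment yields $n$ edges whose deletion caps the reachability of every $v_i$ and $c_j$ at $h$, and conversely any feasible deletion of size $n$ is, by the variable gadget, forced to select exactly one of $\{e_i^T,e_i^F\}$ per variable, and then the clause-gadget bound forces the resulting assignment to satisfy every clause.
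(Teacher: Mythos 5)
Your high-level strategy is exactly the one the paper uses: a reduction from a bounded-occurrence SAT variant with one variable gadget per variable, budget $k=n$ forcing one deletion per gadget, clause vertices whose reachability drops to at most $h$ only when a ``satisfying'' literal edge is deleted, and lifetime $2$ so that the two problems coincide on the constructed instance. However, what you have written is a \emph{specification} of the gadgets, not a construction of them, and the construction is where all the content of this theorem lives. You never exhibit the variable gadget (the paper's has a head vertex joined at time $1$ to the two literal vertices and to enough padding vertices that its reachability is exactly $8=h+1$, with the two literal edges at time $2$ hanging off the literal vertices), nor the clause gadget (a clause vertex joined at time $1$ to its three literal vertices plus one satellite leaf, again giving reachability exactly $8$), nor do you verify the degree bound $\Delta_{G,\lambda}\le 5$ or check that every auxiliary vertex has reachability at most $7$. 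You correctly identify this engineering as ``the main obstacle'' but then do not overcome it, so the argument is incomplete at precisely the point where it could fail.

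Two further concrete issues. First, your source problem is on thin ice: 3-SAT with \emph{exactly} three literals per clause in which every variable occurs in at most three clauses is \emph{always satisfiable} (Tovey), hence not NP-complete; you must either allow clauses of size less than three or, as the paper does, reduce from the $(3,4)$ variant (exactly three literals per clause, at most four occurrences per variable), additionally assuming each variable occurs in both polarities so that each literal occurs at most three times --- which is exactly what makes the degree bound of $5$ work for the literal vertices. Second, your claim that any feasible deletion of size $n$ is ``forced to select exactly one of $\{e_i^T,e_i^F\}$ per variable'' is too strong for any gadget in which deleting an auxiliary edge also lowers the diagnostic vertex's reachability: the budget argument only forces one deletion \emph{per gadget}, not one deletion from the pair of assignment edges. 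The paper repairs this by assigning an arbitrary truth value to any variable whose gadget lost a non-literal edge, and by observing that each clause vertex's reachability can only be reduced by deleting a literal edge incident to one of its literal neighbours; your write-up needs the same patch to make the backward direction go through.
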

\begin{proof}
As we mentioned in Section~\ref{sec:prelim}, the problem trivially belongs to NP. 
Now we give a reduction from the following well-known NP-complete problem~\cite{Tovey84}.

\vspace{0,1cm} \noindent \fbox{ 
\begin{minipage}{0.96\textwidth}
 \begin{tabular*}{\textwidth}{@{\extracolsep{\fill}}lr} \textsc{3,4-SAT} & \\ \end{tabular*}
 
  \vspace{1.2mm}
{\bf{Input:}}  A CNF formula $\Phi$ with exactly 3 variables per clause, such that each variable appears in at most 4 clauses.\\
{\bf{Output:}} Does there exists a truth assignment satisfying $\Phi$?
\end{minipage}} \vspace{0,3cm}

	Let $\Phi$ be an instance of $3,4$-\textsc{SAT} with variables $x_1, \ldots, x_n$, and clauses $C_1, \ldots, C_m$. 
	We may assume without loss of generality that every variable $x_i$ appears at least once negated and at least once unnegated in $\Phi$. Indeed, if a variable $x_i$ appears only negated (resp.~unnegated) in $\Phi$, 
	then we can trivially set $x_i=0$ (resp.~$x_i=1$) and then remove from~$\Phi$ all clauses where~$x_i$ appears; this process would provide an equivalent instance of~\textsc{3,4-SAT} of smaller size.
	Now we construct an instance $((G, \timeFunc), k, h)$ of \tempEdgeDel\
	which is a yes-instance if and only if $\Phi$ is satisfiable.
	
	\begin{figure}[h]
		\centering
		\includegraphics[scale=1.5]{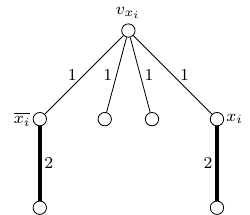}  
		\caption{\small The gadget corresponding to variable $x_i$. The number beside an edge is the time 
		step at which that edge appears. The bold edges are the ones we refer to as \emph{literal edges}.}	
		\label{fig:gadget}
	\end{figure}

	We construct $(G, \timeFunc)$ as follows.
	For each variable $x_i$ we introduce in $G$ a copy of the subgraph shown in Figure \ref{fig:gadget},
	which we call an \textit{$x_i$-gadget}. There are three special vertices in an $x_i$-gadget:
	$x_i$ and $\overline{x_i}$, which we call \textit{literal vertices}, and $v_{x_i}$ which we
	call \textit{the head vertex} of $x_i$-gadget. All the edges incident to $v_{x_i}$ appear
	in time step 1, the other two edges of $x_i$-gadget, which we call \textit{literal edges}, 
	appear in time step 2.
	Additionally, for every clause $C_s$ we introduce in $G$ a \textit{clause vertex} $C_s$ that is adjacent to the three literal vertices corresponding to the literals of $C_s$.
	All the edges incident to $C_s$ appear in time step 1. See Figure \ref{fig:fragment}
	for illustration. Finally, we set $k = n$ and $h = 6$.

	First recall that, in $\Phi$, every variable $x_i$ appears at least once negated and at least once unnegated. 
	Therefore, since every variable $x_i$ appears in at most four clauses in $\Phi$, it follows that each of the two vertices corresponding to the literals $x_i,\overline{x_i}$ is connected with at most three clause vertices. 
	Therefore the degree of each vertex corresponding to a literal in the constructed temporal graph $(G, \timeFunc)$ (see Figure~\ref{fig:fragment}) is at most five. Moreover, it can be easily checked that the same also holds for every other vertex of $(G, \timeFunc)$, and thus $\Delta_G \leq 5$.

	We continue by observing temporal reachabilities of the vertices of $(G, \timeFunc)$.
	A literal vertex can only temporally reach its neighbours and so, by the argument above, has temporal reachability at most 6 (including the vertex itself).  
	The head vertex of a gadget temporally reaches only the vertices of the gadget, hence the temporal
	reachability of any head vertex in $(G,\timeFunc)$ is 7. Any other vertex belonging to a gadget can
	temporally reach only its unique neighbour in $G$. Every clause vertex can reach only
	the corresponding literal vertices and their neighbours incident to the literal edges. 
	Hence the temporal reachability of every clause vertex in $(G,\timeFunc)$ is 7. 
	Therefore in our instance of \tempEdgeDel\ we only need to care about temporal reachabilities 
	of the clause and head vertices.
	
	Now we show that, if there is a set $E$ of $n$ edges such that the maximum temporal reachability of
	the modified graph $(G, \timeFunc) \setminus E$ is at most $6$, then $\Phi$ is satisfiable. First, notice that since
	the temporal reachability of every head vertex is decreased in the modified graph and the number of gadgets is $n$, 
	the set $E$ contains exactly one edge from every gadget. 
	Hence, as the temporal reachability of every clause vertex $C_s$ is also decreased, set $E$ must contain
	at least one literal edge that is incident to a literal neighbour of $C_s$.
	We now construct a truth assignment as follows: for every literal edge in $E$ we set the corresponding literal to $\true$. 
	If there are unassigned variables left we set them arbitrarily, say, to $\true$. 
	
	Since $E$ has one edge in every gadget, every variable was assigned exactly once.  
	Moreover, by the above discussion, every clause has a literal that is set to $\true$ by the assignment. 
	Hence the assignment is well-defined and satisfies $\Phi$.
	
	To show the converse, given a truth assignment $(\alpha_1, \ldots, \alpha_n)$ satisfying $\Phi$ we construct a
	set $E$ of $n$ edges such that the maximum temporal reachability of $(G, \timeFunc) \setminus E$ is at most $6$. 
	For every $i \in [n]$ we add to $E$ the literal edge incident to $x_i$ if $\alpha_i = 1$, and the literal edge incident to 
	$\overline{x_i}$ otherwise. By the construction, $E$ has exactly one edge from every gadget. Moreover, since the 
	assignment satisfies $\Phi$, for every clause $C_s$ set $E$ contains at least one literal edge corresponding to 
	one of the literals of $C_s$. Hence, by removing $E$ from $(G, \timeFunc)$, we strictly decrease temporal reachability 
	of every head and clause vertex.
\end{proof}

\begin{figure}[H]
	\centering
	\includegraphics[width=\textwidth]{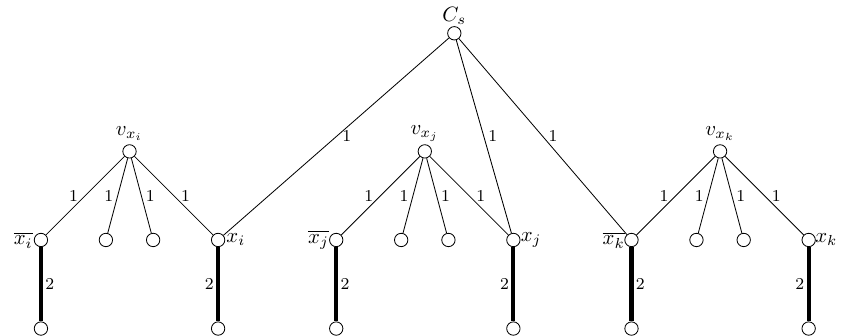}  
	\caption{\small A subgraph of a temporal graph corresponding to an instance of \textsc{3,4-SAT}.}
	\label{fig:fragment}
\end{figure}

\section{Approximability}
\label{sec:approx}

Given the strength of the hardness results proved in Section \ref{sec:NPhard}, it is natural to ask whether the problem admits efficient approximation algorithms for the following optimisation problem.

\vspace{0,1cm} \noindent \fbox{ 
\begin{minipage}{0.96\textwidth}
 \begin{tabular*}{\textwidth}{@{\extracolsep{\fill}}lr} \textsc{Minimum Temporal Reachability Edge Deletion} \ \ (\minTempEdgeDel) & \\ \end{tabular*}
 
  \vspace{1.2mm}
{\bf{Input:}}  A temporal graph $(G,\timeFunc)$ and $h \in \mathbb{N}$.\\
{\bf{Output:}} A set $X$ of edges of \textit{minimum} size
such that the maximum temporal reachability of $(G,\timeFunc) \setminus X$ is at most $h$.
\end{minipage}} \vspace{0,3cm}

We begin with some more notation.  If $(G,\lambda)$ is a temporal graph and $v \in V(G)$, we say that $T$ is a \emph{reachable subtree for $v$} if $T$ is a subtree of $G$, $v \in V(T)$ and, for all $u \in V(T)\setminus \{v\}$, there is a temporal path from $v$ to $u$ in $(T,\lambda')$, where $\lambda'$ is the restriction of $\lambda$ to the edges of $T$. We first observe that, if a temporal graph has maximum reachability more than $h$, we can efficiently find a minimal reachable subtree witnessing this fact.

\begin{lemma}\label{lma:find-subtree}
Let $(G,\lambda)$ be a temporal graph, and $h$ a positive integer.  There is an algorithm running in polynomial time which, on input $((G,\lambda),h)$,
\begin{enumerate}
\item if the maximum temporal reachability of $(G,\lambda)$ is at most $h$, outputs ``YES'';
\item if the maximum temporal reachability of $(G,\lambda)$ is greater than $h$, outputs a vertex $v \in V(G)$ and a reachable subtree $T$ for $v$ where $T$ has exactly $h+1$ vertices.
\end{enumerate}
\end{lemma}
\begin{proof}
For any vertex $v \in V(G)$, carrying out a version of Dijkstra's algorithm adapted to include temporal information starting from $v$ for up to $h$ steps will either identify a reachable subtree for $v$ on $h+1$ vertices or determine that no such subtree exists.  Thus, by repeating this process for each vertex $v \in V(G)$ we can either find some pair $(v,T)$ such that $T$ has $h+1$ vertices and is a reachable subtree for $v$, or determine that no vertex has a reachable subtree with $h+1$ vertices.  In the latter case, it is clear that no vertex has a temporal reachability set of size more than $h$, so we can safely output ``YES''.
\end{proof}

Let $h$ be a positive integer and $(G = (V, E),  \lambda)$ be a temporal graph.  We say that an edge set $E' \subseteq E$ is a \emph{valid deletion} in $(G = (V, E),  \lambda)$  with respect to $h$ if the maximum temporal reachability of $(G = (V, E),  \lambda) \backslash E'$ is at most $h$.  Where $h$ is clear from the context, we may not refer to it explicitly. We now make a simple observation about valid deletions.

\begin{lemma}\label{lma:delete-at-least-one}
Let $(G,\lambda)$ be a temporal graph and $h$ a positive integer.  Suppose that $T$ is a reachable subtree for some $v \in V(G)$ and that $T$ has more than $h$ vertices.  If $E' \subseteq E(G)$ is a valid deletion in $(G = (V, E),  \lambda)$ with respect to $h$, then $|E' \cap E(T)| \geq 1$.
\end{lemma}
\begin{proof}
Suppose, for a contradiction, that $E'$ does not contain any edge of $T$.  Then $T$ is a subgraph of $G'[V(T)]$ so, for every vertex $u \in V(T) \setminus \{v\}$, $G'[V(T)]$ and hence $G'$ contains a temporal path from $v$ to $u$.  
It follows that $V(T) \subseteq \reach_{(G',\lambda')}(v)$ and hence $|\reach_{(G',\lambda')}(v)| > h$, contradicting the assumption that $E'$ is a valid deletion.
\end{proof}

Using these two observations, we now describe our first approximation algorithm.

\begin{theorem}\label{thm:h_Approx}
There exists a polynomial-time algorithm to compute an $h$-approximation to \minTempEdgeDel, where $h$ denotes the maximum permitted reachability.
\end{theorem} 
\begin{proof}
Let $((G,\lambda),h)$ be an input instance of \minTempEdgeDel, and
let $E_{\opt} \subseteq E$ be a minimum-cardinality edge set such that $(G, \lambda)  \setminus E_{\opt}$ has temporal reachability at most $h$.  It suffices to demonstrate that we can find in polynomial time a set $E' \subseteq E$ such that
 $(G,\lambda) \setminus E'$
has temporal reachability at most $h$, and  $|E'| \leq h|E_{\opt}|$. 
 We claim that the following algorithm achieves this.
\begin{enumerate}
\item Initialise $E':= \emptyset$.
\item While $(G,\lambda)$ has reachability greater than $h$:
\begin{enumerate}
\item Find a pair $(v,T)$ such that $v \in V(G)$, $T$ is a reachable subtree for $v$ and $|T| = h+1$.
\item Add $E(T)$ to $E'$, and update $(G,\lambda) \leftarrow  (G,\lambda) \setminus E'$.
\end{enumerate}
\item Return $E'$.
\end{enumerate}

We begin by considering the running time of this algorithm.  By Lemma \ref{lma:find-subtree} we can determine whether to execute the while loop and, if we do enter the loop, execute Step 2(a), all in polynomial time.  Steps 1 and 2(b) can clearly both be carried out in linear time.  Moreover, the total number of iterations of the while loop is bounded by the number of edges in $G$, so we see that the algorithm will terminate in polynomial time.

At every iteration, the algorithm removes exactly $h$ edges, while the optimum deletion set $E_{\opt}$ must remove at least one of these $h$ edges. Therefore, in total, we remove at most $h|E_{\opt}|$ edges. 
To complete the proof, we observe that, by construction, the identified set $E'$ is a valid deletion set.
\end{proof}

We now demonstrate that we can improve on this general approximation algorithm when the underlying graph has certain useful 
properties, in particular when the cutwidth is bounded.

The \emph{cutwidth} of a graph $G = (V, E)$ is the minimum integer $c$ such that the vertices of $G$ can be arranged in a linear order $v_1, \ldots, v_n$, called a \textit{layout}, such that for every $i, 1 \leq i < n$ the number of edges with one endpoint in $v_1, ..., v_i$ and one in $v_{i+1}, ..., v_n$ is at most $c$.  Given a layout $v_1, v_2, \ldots, v_n$, we say that edges with one endpoint in $v_1, ..., v_i$ and
one in $v_{i+1}, ..., v_n$ \emph{span} $v_i, v_{i+1}$, and say that the maximum number of edges spanning a pair of consecutive vertices is the \emph{cutwidth} of the layout.  For any constant $c$, Thilikos et al. \cite{thilikosSernaBodlaender_cutwidth} give a linear-time algorithm to generate a layout of cutwidth at most $c$ if one exists.  

We can use a similar argument to that in Theorem \ref{thm:h_Approx} to give a polynomial-time algorithm to compute a $c$-approximation to \minTempEdgeDel, where $c$ is the cutwidth of the underlying graph of the input temporal graph.
In addition to Lemma \ref{lma:delete-at-least-one}, we will also make use of the following definition and observation:

%
Let $G = (V, E)$ be a graph. A \emph{cut} $(A,B)$ of $G$ is a partition of $V$ into two subsets $A$ and $B$.
The \emph{cut-set} of cut $(A,B)$ is the set $\{ ab \in E ~|~ a \in A, b \in B \}$ of edges that have one endpoint in $A$
and the other endpoint in $B$.
 
\begin{lemma}\label{lem:separator}
%
Let $h$ be a positive integer, $(G = (V, E),  \lambda)$ be a temporal graph, $(A,B)$ be a cut of $G$,
and $E'$ be the cut-set of $(A,B)$.
If $E'_A$ and $E'_B$ are valid deletion sets for $(G[A], \lambda|_{E(G[A])})$, $(G[B], \lambda|_{E(G[B])})$, respectively, then $E'_A \cup E'_B \cup E'$ is a valid deletion set for $(G = (V, E),  \lambda)$.
\end{lemma} 

We now describe a cutwidth approximation algorithm:

\begin{theorem}\label{thm:cutwidth}
There exists a polynomial-time algorithm to compute a $c$-approximation to \minTempEdgeDel\, provided that a layout of cutwidth $c$ is given.
\end{theorem}
\begin{proof}
Let $((G = (V, E),  \lambda),h)$ be the input to \minTempEdgeDel, and suppose that the layout $v_1, \ldots, v_n$ of $V$, with cutwidth $c$, is given.  Consider the algorithm:
\begin{enumerate}
\item Initialise $E':= \emptyset$.
\item Initialise $i := 1$.
\item While $(G,\lambda)$ has reachability greater than $h$:
\begin{enumerate}
\item Find the maximum $j \in \{i,\ldots,n\}$ such that the maximum reachability in the subgraph 
$(G[\{v_i,\ldots,v_j\}],  \lambda|_{E(G[\{v_i,\ldots,v_j\}])})$ is at most $h$.  
\item Add all edges that span $v_j, v_{j+1}$ to $E'$, and update $(G,\lambda) \leftarrow (G, \lambda) \setminus E'$.
\item Update $i \leftarrow j+1$
\end{enumerate}
\item Return $E'$.
\end{enumerate}
First we consider the running time of this algorithm: within the loop, both $3(a)$ and $3(b)$ can be executed in polynomial time.  The loop itself will execute at most $|V|$ times, 
and the logical condition can be evaluated in polynomial time.  Thus the overall running time is polynomial.  

At every iteration the algorithm detects a part $G[v_i,...,v_{j+1}]$ of the graph, from which the optimum solution must delete at least one edge (by Lemma \ref{lma:delete-at-least-one}). In this case the algorithm deletes at most $c$ edges, while guaranteeing that no further edge from within $G[v_i,...,v_{j+1}]$ needs to be deleted in subsequent steps (by Lemma \ref{lem:separator}, and because the set deleted is the cut-set of cut $(\{v_i, \ldots, v_j \}, \{v_i, \ldots, v_n\} \setminus \{v_i, \ldots, v_j\})$ in $G[v_i,...,v_{n}]$. Thus the algorithm provides a $c$-approximation of the optimum solution.
\end{proof}

Then, for any fixed cutwidth $c$, using the layout generation algorithm given by Thilikos et al. \cite{thilikosSernaBodlaender_cutwidth} and the algorithm described above, we can give a cutwidth-approximation to \minTempEdgeDel\ for graphs with cutwidth $c$. 

\begin{corollary}\label{cor:const-cutwidth}
There exists a polynomial-time algorithm to compute a $c$-approximation to \minTempEdgeDel\, whenever the cutwidth of the input graph is at most $c$.
\end{corollary}

Note that as paths have cutwidth one, Corollary \ref{cor:const-cutwidth} gives us an exact polynomial-time algorithm for \minTempEdgeDel~on paths.  

We conclude this section by demonstrating that there is unlikely to be a polynomial-time algorithm to compute any constant factor approximation to \minTempEdgeDel\ in general, even for temporal graphs of lifetime two.

\begin{theorem}\label{th:inapprox}
	Unless $P = NP$, there exists a natural number $c$ such that for every 
	$\delta \in \left(0,\frac{1}{c+2}\right)$,
	\minTempEdgeDel\ cannot be approximated in polynomial time to within a factor of 
	 $\delta \ln{n}$, where $n$ is the number of vertices in the input temporal graph, 
	 even if the input temporal graph has lifetime two.
%
\end{theorem}
\begin{proof}
	To prove the theorem we provide a reduction from the \textsc{Set Cover} problem, which given a family $\mathcal{S}$ of
	subsets of a ground set $U$ asks for a minimum size subfamily $\mathcal{S}' \subseteq \mathcal{S}$ that covers $U$, i.e.
	the union of the sets in $\mathcal{S}'$ equals the ground set.
	The result will be derived from the reduction and the fact that, unless $P = NP$, 
	for every $\varepsilon \in (0,1)$, 
	\textsc{Set Cover} cannot be approximated to within factor of $(1-\varepsilon)\ln N$ on instances
	with at most $N^c$ sets, where $N$ is the size of the ground set and $c$ is some natural constant\footnote{This fact follows 
from the reduction given by Moshkovitz~\cite{Moshkovitz12} (see Definition~3 in~\cite{Moshkovitz12}), 
which proves the $(1-\varepsilon)\ln N$-approximation hardness of \textsc{Set Cover} assuming the projection games conjecture. 
This result was subsequently strengthened by Dinur and Steurer~\cite{dinur2014analytical} by proving the same hardness result for \textsc{Set Cover} under the assumption that P$\neq$NP.}.
	
	Let $(U,\mathcal{S})$ be an instance of \textsc{Set Cover}, where $U = \{1, 2, \ldots, N \}$, 
	$\mathcal{S} = \{ S_1, S_2, \ldots, S_M \}$, and $M \leq N^c$. 
	Let $f_i$ denote the frequency of element $i \in U$, i.e. 
	$f_i = |\{ S~:~S \in \mathcal{S}$ and $i \in S \}|$ and let $\ell = \max \{ f_i~:~i \in U\}$. We define the 
	\minTempEdgeDel\ instance $((G, \lambda), h)$ as follows. The vertex set $V(G)$ of the underlying graph $G$ is 
	$$
		U \cup \mathcal{S} \cup \{S_1', S_2', \ldots, S_M' \} \cup \{ d_j^i~:~i \in U, j \in \{ 1, \ldots, 2(\ell - f_i) + N \}\}.
	$$
Thus, since $M \leq N^c$, the number $n$ of vertices in the constructed graph $G$ is 
$$
n \leq N + 2M + N(2(M-1)+N)
  \leq N^c (3N+2)
$$
The edge set of $G$ is such that 
	\begin{enumerate}
		\item $S_i S_i' \in E(G)$ for every $i \in [M]$;
		\item $q d_j^q \in E(G)$ for every $q \in [N]$ and $j \in [2(\ell - f_q) + N]$; and 
		\item $G[U \cup \mathcal{S}]$ is the bipartite element-set incidence graph of $\mathcal{S}$,
		i.e. the bipartite graph with parts $U$ and $\mathcal{S}$ in which $i \in U$ is adjacent to $S \in \mathcal{S}$ 
		if and only if $i \in S$.
	\end{enumerate}
	Figure \ref{fig:setCover} illustrates the structure of $(G, \lambda)$.
	The edges in $\{ S_i S_i' ~|~ i \in [M] \}$ appear only in time step 2 and all the other edges appear only in time step 1.
	Finally, we set $h =2 \ell + N$.
	
	\begin{figure}[h]
		\centering
		\includegraphics[width=\textwidth]{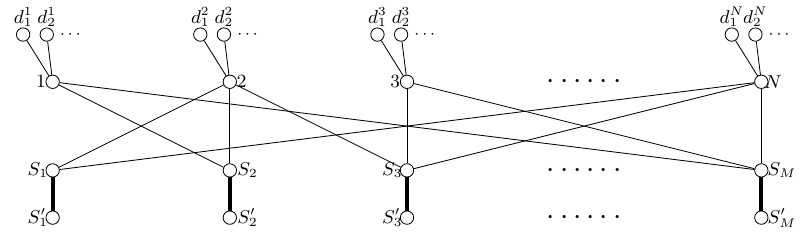}  
		\caption{\small The temporal graph $(G, \lambda)$ corresponding to an instance $(U,\mathcal{S})$ of \textsc{Set Cover}.}
		\label{fig:setCover}
	\end{figure}

	We claim that the size of a minimum subfamily of $\mathcal{S}$ that covers $U$ is equal to the 
	size of a minimum set $X$ of edges such that the maximum temporal reachability of $(G,\timeFunc) \setminus X$ 
	is at most $h$.
	
	We start by observing that the temporal reachability of every vertex 
	$i \in U$ is $2(\ell - f_i) + N + 2f_i + 1 = 2 \ell + N + 1$,
	the temporal reachability of $S_j \in \mathcal{S}$ is $2+ |S_j| \leq 2 + N \leq 2 \ell + N$, 
	and the temporal reachability of any other vertex is 2. 
	Therefore, in order to limit the temporal reachabilities of the vertices of $(G,\lambda)$ to $h=2 \ell + N$, it is 
	necessary and sufficient to reduce the temporal reachability of every vertex in $U$ by one.
	
	Next, we show that, if $X$ is a feasible solution of \minTempEdgeDel\ on $((G, \lambda), h)$, then there exists a feasible solution $X'$
	such that $|X'| \leq |X|$ and $X' \subseteq \{ S_i S_i' ~|~ i \in [M] \}$. For this we notice that none of the edges
	 incident with $i \in U$ affects the temporal reachability of $j \in U$ for any $j \in [N] \setminus \{i\}$. 
	Therefore, if $X$ contains an edge incident with $i$, excluding from $X$ all such edges and adding to $X$
	an edge $S_pS_p'$ with $i \in S_p$ (if there is no such edge in $X$ yet), preserves the feasibility of $X$.
	By repeating this procedure successively for every vertex in $U$, we obtain a set $X'$ with the desired properties. 
	Note that $X'$ can be constructed from $X$ in time linear in $|X|$.
	
	To complete the proof of the claim, 
	it remains to show that a set 
$$X = \{ S_{t_1}S_{t_1}', S_{t_2}S_{t_2}', \ldots, S_{t_k}S_{t_k}'\}$$
	is a feasible solution of \minTempEdgeDel\ on $((G, \lambda), h)$ if and only if $\{ S_{t_1}, S_{t_2}, \ldots, S_{t_k} \}$ is a set cover of $U$,
	which immediately follows from the observation that the temporal reachability of $i \in U$ in $(G, \lambda) \setminus X$ is
	strictly smaller than that of $i$ in $(G, \lambda)$ if and only if $i$ is contained in $S_{t_j}$ for some $S_{t_j} S_{t_j}' \in X$.


	Now, suppose there exists a polynomial-time algorithm $\mathcal{A}$ that for some 
	$\delta \in \left(0,\frac{1}{c+2}\right)$ approximates \minTempEdgeDel\ 
	to within a factor of $\delta \ln {n}$, where $n$ is the number of vertices in the input temporal graph. 
	We will show that using $\mathcal{A}$ the \textsc{Set Cover} problem on instances
	with at most $N^c$ sets can be efficiently approximated to within a factor of
	$(1-\varepsilon)\ln N$ for some $\varepsilon \in (0,1)$, which is not possible unless $P = NP$ \cite{dinur2014analytical}.
	Let $(U,\mathcal{S})$ be an arbitrary $N$-element \textsc{Set Cover} instance with $N \geq 4$.
	First, we construct in polynomial time, as in the reduction, the corresponding \minTempEdgeDel\ instance $((G, \lambda), h)$.	
	Then, using  $\mathcal{A}$ we find a $\delta \ln {n}$-approximate solution $X$ for $((G, \lambda), h)$. 
	By the above discussion, in linear time we can construct from $X$ a $\delta \ln {n}$-approximate 
	set cover $\mathcal{S}'$ of $U$.
	Since, by construction, the number $n$ of vertices in $G$ is at most $N^c (3N+2)$, and 
	$\delta\ln{n} \leq \delta(c+2)\ln{N}$ for every 
	$N \geq 4$, we conclude that $\mathcal{S}'$ is a $(1-\varepsilon)\ln N$-approximate solution for 
	$(U,\mathcal{S})$, where $\varepsilon = 1 - \delta(c+2) \in (0,1)$.
\end{proof}

\section{An exact FPT algorithm}
\label{sec:fpt}

Our results in the previous sections (see e.g.~Theorem~\ref{thm:satRed}) imply that \tempEdgeDel\ is
para-NP-hard, when simultaneously parameterised by $h$ and $\Delta_G$.
In the current section we complement these results by showing that \tempEdgeDel\ admits an FPT algorithm,  
when simultaneously parameterised by~$h$, $\Delta_G$, and $\tw(G)$, where 
$\tw(G)$ is the treewidth of $G$. 

Our results (see Theorem~\ref{thm:tw-fpt}) illustrate how a celebrated theorem 
by Courcelle (see Theorem~\ref{th:courcelle}) can be applied to solve temporal graph problems, 
following a general framework that could potentially be applied to many other temporal problems as well: 
(i)~we define a suitable family $\tau$ of relations (i.e.~a suitable relational vocabulary) 
and a Monadic Second Order (MSO) formula $\phi$ (of length $\ell$) that expresses our temporal graph problem at hand; 
(ii)~we represent an arbitrary input temporal graph $(G,\lambda)$ with an equivalent relational structure $\mathcal{A}$ (of treewidth at most $t$); 
(iii)~we apply Courcelle's general theorem which solves our problem at hand in time linear to the size 
of the relational structure $\mathcal{A}$, whenever both $\ell$ and $t$ are bounded; 
that is, in time $f(t,\ell) \cdot ||\mathcal{A}||$.

Here, we apply this general framework to the particular problem \tempEdgeDel\ (by appropriately defining $\tau$, $\phi$, and $\mathcal{A}$) 
such that $\ell$ only depends on our parameter $h$, while $t$ only depends on $\Delta_G$ and $\tw(G)$; this yields our FPT algorithm. 
Here, as it turns out, the size of $\mathcal{A}$ is quadratic on the size of the input temporal graph $(G,\lambda)$. 
Before we present the main result of this section (see Section~\ref{subsec:fpt-algorithm}),  
we first present in Section~\ref{subsec:prelim-fpt} some necessary background on logic and on tree decompositions of graphs and relational structures. 
For any undefined notion in Section~\ref{subsec:prelim-fpt}, we refer the reader to \cite{flumgrohe}.

\subsection{Preliminaries for the algorithm}
\label{subsec:prelim-fpt}


\subsubsection*{Treewidth of graphs}

Given any tree $T$, we will assume that it contains some distinguished vertex $r(T)$, which we will call the \emph{root} of $T$.  For any vertex $v \in V(T) \setminus \{ r(T) \}$, the \emph{parent} of $v$ is the neighbour of $v$ on the unique path from $v$ to $r(T)$; the set of \emph{children} of $v$ is the set of all vertices $u \in V(T)$ such that $v$ is the parent of $u$.  The \emph{leaves} of $T$ are the vertices of $T$ whose set of children is empty.  We say that a vertex $u$ is a \emph{descendant} of the vertex $v$ if $v$ lies somewhere on the unique path from $u$ to $r(T)$. In particular, a vertex is a descendant of itself, and every vertex is a descendant of the root. Additionally, for any vertex $v$, we will denote by $T_v$ the subtree induced by the descendants of $v$.

We say that $(T,\mathcal{B})$ is a \emph{tree decomposition} of $G$ if $T$ is a tree and $\mathcal{B} = \{\mathcal{B}_s: s \in V(T)\}$ is a collection of non-empty subsets of $V(G)$ (or \emph{bags}), indexed by the nodes of $T$, satisfying:
\begin{enumerate}
\item[(1)] for all $v \in V(G)$, the set $\{ s \in T : v \in \mathcal{B}_s \}$ is nonempty and induces a connected subgraph in $T$,
\item[(2)] for every $e=uv \in E(G)$, there exists $s \in V(T)$ such that $u,v \in \mathcal{B}_s$.
\end{enumerate}
The \emph{width} of the tree decomposition $(T,\mathcal{B})$ is defined to be $\max \{ |\mathcal{B}_s| : s \in V(T) \} - 1$, and the \emph{treewidth} of $G$ is the minimum width over all tree decompositions of $G$.

%

Although it is NP-hard to determine the treewidth of an arbitrary graph \cite{arnborg87}, the problem of determining whether a graph has treewidth at most $w$ 
(and constructing such a tree decomposition if it exists) can be solved in linear time for any constant $w$~\cite{bodlaender93}; note that this running time depends exponentially on $w$. 

\begin{theorem}[Bodlaender \cite{bodlaender93}]
\label{thm:bodlaender93}
For each $w \in \mathbb{N}$, there exists a linear-time algorithm, that tests whether a given graph $G = (V, E)$ has treewidth at most $w$, and if so, outputs a tree decomposition of $G$ with treewidth at most $w$.
\end{theorem}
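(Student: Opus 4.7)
The plan is to combine a coarse linear-time approximation of treewidth with a recursive graph-reduction scheme, following the spirit of Bodlaender's original argument. The overall structure is a divide-and-conquer in which the graph shrinks by a constant factor at each recursion level, so the recurrence $T(n) = T(cn) + O(n)$ solves to $O(n)$, with all hidden constants depending only on $w$.

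First, I would establish an auxiliary linear-time subroutine that, given $G$ and $w$, either certifies $\tw(G) > w$ or outputs a tree decomposition of width $O(w)$; this can be obtained via a balanced-separator approximation that recurses on subproblems of proportional size, exploiting the fact that small-treewidth graphs admit small balanced separators. The heart of the argument is then a structural lemma asserting that every $n$-vertex graph of treewidth at most $w$ contains a linearly large set of \emph{reducible} vertices: either simplicial vertices (whose neighborhoods form a clique, and which can therefore be removed without increasing the treewidth) or low-degree \emph{improvable} vertices with two non-adjacent neighbors that can be contracted while maintaining the treewidth bound. By selecting a maximal collection of reducible vertices whose modifications do not interfere with one another, I can perform many reductions simultaneously in linear time to obtain a graph $G'$ with $|V(G')| \leq c |V(G)|$ for some $c < 1$ depending only on $w$. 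Recursively computing a tree decomposition of $G'$ and then undoing the reductions yields a tree decomposition of $G$ whose width remains $O(w)$.

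Having obtained such an approximate decomposition in linear time, the final step is to run a standard exact dynamic program over it to decide whether $\tw(G) \leq w$ and, if so, to produce an optimal-width decomposition; for constant $w$ this DP runs in linear time. The main obstacle is the structural lemma itself: it is nontrivial to guarantee a linear supply of reducible vertices in every bounded-treewidth graph, and still more delicate to show that many such reductions can be applied in parallel while the width bound is preserved after lifting the tree decomposition back. Establishing this requires a careful exchange argument on tree decompositions and is the combinatorial core that distinguishes this result from simpler bounded-treewidth algorithms yielding only polynomial, rather than linear, running times.
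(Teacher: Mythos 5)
The paper offers no proof of Theorem~\ref{thm:bodlaender93}: it is imported as a black box from \cite{bodlaender93}, so there is no internal argument to compare yours against. Your outline does reproduce the architecture of Bodlaender's actual proof---shrink the graph by a constant factor via simplicial/``improvable''-vertex reductions, recurse, lift the resulting decomposition back at the cost of a constant-factor blow-up in width, and finish with an exact dynamic program on the width-$O(w)$ decomposition---so as a roadmap it points in the right direction. But as a proof it has concrete gaps beyond the one you acknowledge.

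First, the ``auxiliary linear-time subroutine'' via balanced separators is not linear-time: that approach recurses on two parts whose sizes sum to at least $n$, each at most about $2n/3$ plus the separator, so the recurrence is $T(n)=T(n_1)+T(n_2)+O(n)$ and solves to $O(n\log n)$. This is precisely the barrier the reduction scheme exists to circumvent; in the actual algorithm the approximate decomposition is obtained only by lifting the recursive call's output, so your first subroutine is both unobtainable as described and redundant. Second, the structural lemma---that every graph of treewidth at most $w$ contains a linear fraction of pairwise non-interfering reducible vertices (and, in the complementary case, a linear-size matching whose contraction preserves the width bound), together with the claim that undoing the reductions keeps the lifted decomposition within width $O(w)$---is the entire combinatorial content of \cite{bodlaender93}; you name it but give no argument, and the ``careful exchange argument'' you allude to is exactly what is missing. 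Third, the closing step silently invokes the Bodlaender--Kloks dynamic program, which for fixed $w$ decides $\tw(G)\leq w$ and constructs an optimal decomposition from an $O(w)$-width one in linear time; this is itself a substantial theorem, not a routine DP. What you have written is a correct high-level summary of the known proof, not a proof.
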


\subsubsection*{Relational structures and monadic second order logic }

A \textit{relational vocabulary} $\tau$ is a set of relation symbols. Each relation symbol $R$ 
has an \textit{arity}, denoted $\text{arity}(R) \geq 1$. A \textit{structure} $\mathcal{A}$ 
of vocabulary $\tau$, or $\tau$-structure,  consists of a set $A$, called the \textit{universe}, and an interpretation
$R^{\mathcal{A}} \subseteq A^{\text{arity(R)}}$ of each relation symbol $R \in \tau$.
We write $\overline{a} \in R^{\mathcal{A}}$ or $R^{\mathcal{A}}(\overline{a})$ to denote that the tuple $\overline{a} \in A^{\text{arity}(R)}$ belongs to the relation $R^{\mathcal{A}}$. 

We briefly recall the syntax and semantics of first-order logic. We fix a countably
infinite set of (\textit{individual}) \textit{variables}, for which we use small letters. 
\textit{Atomic formulas of vocabulary $\tau$} are of the form:
\begin{enumerate} 
	\item $x = y$ or
	\item $R(x_1 \ldots x_r)$,
\end{enumerate}
where $R \in \tau$ is $r$-ary and $x_1, \ldots , x_r, x, y$ are variables.
\textit{First-order formulas} of vocabulary $\tau$ are built from the atomic formulas 
using the Boolean connectives $\neg, \wedge, \vee$ and existential and universal quantifiers 
$\exists, \forall$.

The difference between first-order and second-order logic is that the latter allows quantification 
not only over elements of the universe of a structure, but also over subsets of the universe, and even over 
relations on the universe.
In addition to the individual variables of first-order logic, formulas of
second-order logic may also contain \textit{relation variables}, each of which has a
prescribed arity. Unary relation variables are also called \textit{set variables}. We use
capital letters to denote relation variables. To obtain second-order
logic, the syntax of first-order logic is enhanced by new atomic formulas of the
form $X(x_1 \ldots x_k)$, where $X$ is $k$-ary relation variable.
Quantification is allowed over both individual and relation variables.
A second-order formula is \textit{monadic} if it only contains unary relation variables.
Monadic second-order logic is the restriction of second-order logic to monadic formulas.
The class of all monadic second-order formulas is denoted by MSO.

A \textit{free variable} of a formula $\phi$ is a variable $x$ with an occurrence in $\phi$ 
that is not in the scope of a quantifier binding $x$. A \textit{sentence} is a formula without free variables. 
Informally, we say that a structure $\mathcal{A}$ \textit{satisfies} a formula $\phi$ if there
exists an assignment of the free variables under which $\phi$ becomes a true statement 
about $\mathcal{A}$. In this case we will write $\mathcal{A} \models \phi$.

\subsubsection*{Treewidth of relational structures}

The definition of tree decompositions and treewidth generalizes from graphs to arbitrary
relational structures in a straightforward way.
A \textit{tree decomposition} of a $\tau$-structure $\mathcal{A}$ is a pair 
$(T, \mathcal{B})$, 
where $T$ is a tree and $\mathcal{B}$ a family of subsets of the
universe $A$ of $\mathcal{A}$ such that:
\begin{enumerate}
	\item[(1)] for all $a \in A$, the set $\{ s \in V(T) : a \in \mathcal{B}_s \}$ is nonempty and induces a connected subgraph (i.e.~subtree) in $T$,
	\item[(2)] for every relation symbol $R \in \tau$ and every tuple $(a_1, \ldots, a_r) \in R^{\mathcal{A}}$,
	where $r := \textup{arity}(R)$, there is a $s \in V(T)$ such that $a_1, \ldots, a_r \in \mathcal{B}_s$.
\end{enumerate}

The \textit{width} of the tree decomposition $(T, \mathcal{B})$ is the number
$\max \{ |\mathcal{B}_s| : s \in V(T) \} - 1.$
The \textit{treewidth} $\tw(\mathcal{A})$ of $\mathcal{A}$ is the minimum width over all tree
decompositions of $\mathcal{A}$.

We will make use of the version of Courcelle's celebrated theorem for relational structures of bounded treewidth,
which, informally, says that the optimisation problem definable by an MSO formula can be solved in FPT time 
with respect to the treewidth of a relational structure.
The formal statement is an analogue of a similar theorem 
for the model-checking problem.

\begin{theorem}[analogue of Theorem~9.21 in~\cite{CE12}]\label{th:courcelle}
Let $\phi$ be an $MSO$ formula with a free set variable $E$, and let $\mathcal{A}$ be a relational structure on universe $A$, where $\tw(\mathcal{A})\leq t$. 
Then, given a width-$t$ tree decomposition of $\mathcal{A}$, a minimum-cardinality set $E \subseteq A$ 
such that $\mathcal{A}$ satisfies $\phi(E)$ can be computed in time
	$$
		f(t, \ell) \cdot ||\mathcal{A}||,
	$$
where $f$ is a computable function, $\ell$ is the length of $\phi$, and $||\mathcal{A}||$ is the size of $\mathcal{A}$.
\end{theorem}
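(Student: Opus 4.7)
The plan is to derive this optimisation statement from the model-checking version of Courcelle's theorem (Theorem~9.21 of~\cite{CE12}) by promoting the underlying tree-automaton construction to a min-weighted one. First, I would convert the given width-$t$ tree decomposition of $\mathcal{A}$ into a \emph{nice} tree decomposition of the same width in linear time, so that every internal node is an introduce, forget, or join node, and then encode each bag together with its local relations by a symbol from a finite alphabet $\Sigma = \Sigma(t, \tau)$. This turns $\mathcal{A}$ into a labelled tree of size $O(||\mathcal{A}||)$.

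Next, I would enrich each bag symbol with an annotation $E \cap \mathcal{B}_s \subseteq \mathcal{B}_s$ indicating the intersection of the sought set $E$ with the current bag, reducing the problem to finding a minimum-weight annotation whose resulting labelled tree satisfies $\phi$. By the classical equivalence between MSO on bounded-width tree-like structures and finite bottom-up tree automata, the enriched formula $\phi(E)$ is recognised by a tree automaton $\mathfrak{A}_\phi$ whose state space $Q$ is bounded by a computable function $f(t, \ell)$ of the treewidth and of the formula length.

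The central step is then a bottom-up dynamic programme on the nice tree decomposition that, at each node $s$ and each state $q \in Q$, stores the minimum of $|E \cap A_s|$ over all annotations of the subtree rooted at $s$ that drive $\mathfrak{A}_\phi$ into state $q$, where $A_s$ denotes the universe elements introduced in that subtree. The transitions are standard: at an introduce node the cost increments by $0$ or $1$ depending on whether the introduced element is placed in $E$; a forget node leaves the cost unchanged; and at a join node the costs of the two children are added and then corrected by subtracting the number of bag elements already selected, to avoid double-counting. Returning the minimum over accepting states at the root recovers the optimal value, and standard back-pointer reconstruction yields the corresponding set $E$ in time $f(t, \ell) \cdot ||\mathcal{A}||$.

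The main obstacle is inherited from the classical proof of Courcelle's theorem: one must justify the existence of $\mathfrak{A}_\phi$ and the fact that its states correspond to bounded-complexity MSO-types, which is the content of the composition (Feferman--Vaught / Shelah) method underpinning the model-checking version. Once this is in place, the optimisation extension is essentially syntactic: Boolean conjunction in the automaton transitions becomes arithmetic addition (with a correction on shared bag elements), and disjunction becomes minimisation. This is precisely why the statement can be attributed to the existing model-checking theorem via personal communication with Courcelle~\cite{courcelle_personal}.
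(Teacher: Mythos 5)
The paper does not actually prove this statement: it is imported as a known adaptation of the model-checking version of Courcelle's theorem (Theorem~9.21 of~\cite{CE12}), justified by reference to a personal communication~\cite{courcelle_personal}. So there is no in-paper proof to match your argument against; what you have written is a sketch of the standard proof that the cited sources encapsulate, and it follows the canonical route: normalise to a nice tree decomposition, encode the width-$t$ structure as a tree over a finite alphabet, translate the MSO formula (with the free set variable folded into the alphabet as a per-bag annotation) into a finite bottom-up tree automaton via the composition/Thatcher--Wright machinery, and then replace the Boolean semantics of the automaton run by a min-cost semiring to extract a minimum-cardinality witness for $E$. This is exactly the Arnborg--Lagergren--Seese/Courcelle--Mosbah style extension from decision to optimisation, and your identification of where the real work lives (the existence of $\mathfrak{A}_\phi$ with state space bounded by a computable function of $t$ and $\ell$) is correct. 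One point you should make explicit rather than leave implicit: at a join node the two children may have introduced a shared bag element independently, so the dynamic programming table must be indexed not only by the automaton state but also by the chosen subset $E \cap \mathcal{B}_s$ of the current bag (equivalently, the automaton state must determine the annotation on the boundary), and only entries agreeing on that subset may be combined; your ``subtract the double-counted bag elements'' correction silently presupposes this agreement. With that caveat spelled out, your sketch is a sound and essentially complete account of why Theorem~\ref{th:courcelle} holds, and it is strictly more informative than the paper's citation-only treatment.
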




\subsection{The FPT algorithm}
\label{subsec:fpt-algorithm}

In this section we present an FPT algorithm for \tempEdgeDel\ when parameterised simultaneously by three
parameters: $h$, $\Delta_G$, and $\tw(G)$. Our strategy is first, given an input temporal graph $(G,\lambda)$, 
to construct a relational structure $\mathcal{A}_{G,\timeFunc}$ whose treewidth is bounded in terms of 
$\Delta_G$ and $\tw(G)$.
Then we construct an MSO formula $\phi_h$ with a unique free set variable $E$, such that 
$\mathcal{A}_{G,\timeFunc}$ satisfies $\phi_h(E)$ for some $E \subseteq E(G)$ if and only if the maximum reachability 
of $(G,\lambda) \setminus E$ is at most $h$. Finally, we apply Theorem \ref{th:courcelle} to find the minimum cardinality
of such a set $E \subseteq E(G)$. If the minimum cardinality is at most $k$, then $((G,\lambda), k, h)$ is a yes-instance
of the problem, otherwise it is a no-instance.

We note that in the setting we consider in this paper, that is temporal graphs in which each edge is active at a single timestep, the construction below might be simplified slightly; however, in order to demonstrate the flexibility of this general framework, we choose to define a relational structure which would allow us to represent temporal graphs in which edges may be active at more than one timestep.  Observe that Theorem \ref{thm:tw-fpt} can immediately be adapted to this more general context if we replace $\Delta_G$ by the maximum temporal total degree of the input temporal graph (i.e. the maximum number of time-edges incident with any vertex).

Given a temporal graph $(G,\timeFunc)$, we define a relational structure $\mathcal{A}_{G,\timeFunc}$ as follows.  The ground set $A_{G,\timeFunc}$ consists of 
\begin{itemize}
	\item the set $V(G)$ of vertices in $G$,
	\item the set $E(G)$ of edges in $G$, and
	\item the set of all time-edges of $(G,\timeFunc)$, i.e.~the set $\timeEdges(G,\lambda) = \{ (e,t)~|~ e \in E(G), t \in \lambda(e) \}$.
\end{itemize} 

\noindent
On this ground set $A_{G,\timeFunc}$, we define three binary relations $\mathcal{I}$, $\mathcal{R}$, and $\mathcal{L}$ as follows: 

\begin{enumerate}
	\item $(v,e) \in \mathcal{I}$ if and only if $v \in V(G)$, $e \in E(G)$, and $v$ is incident to $e$.
	
	\item $((e_1,t_1), (e_2,t_2)) \in \mathcal{R}$ if and only if the following conditions hold:
	\begin{enumerate}
		\item $(e_1,t_1), (e_2,t_2) \in \timeEdges(G,\lambda)$;
		\item $e_1, e_2$ share a vertex in $G$;
		\item $t_1 < t_2$.
	\end{enumerate}
	
	\item $(f, (e,t)) \in \mathcal{L}$ if and only if $f \in E(G)$, $(e,t) \in \timeEdges(G,\lambda)$, and $f = e$.
\end{enumerate}

\noindent
First we show that the treewidth of $\mathcal{A}_{G,\timeFunc}$ is bounded by a function of $\Delta_G$ and $\tw(G)$.

\begin{lemma}
\label{lem:treewidth-bound}
	The treewidth of $\mathcal{A}_{G,\timeFunc}$ is at most $(2 \Delta_G + 1)(\tw(G) + 1)-1$.
\end{lemma}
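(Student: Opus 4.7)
The plan is to lift an optimal tree decomposition of the static graph $G$ to a tree decomposition of $\mathcal{A}_{G,\timeFunc}$ by enlarging each bag to absorb the edges and time-edges incident to its vertices. By Theorem~\ref{thm:bodlaender93} (or just by assumption), let $(T, \mathcal{B})$ be a tree decomposition of $G$ of width $\tw(G)$, so $|\mathcal{B}_s| \leq \tw(G)+1$ for every $s \in V(T)$. On the same tree $T$, I define for each node $s$ the enlarged bag
\[
\mathcal{B}'_s \;=\; \mathcal{B}_s \;\cup\; \{ e \in E(G) : e \cap \mathcal{B}_s \neq \emptyset \} \;\cup\; \{(e,t) \in \timeEdges(G,\lambda) : e \cap \mathcal{B}_s \neq \emptyset \}.
\]

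Next I would verify that $(T, \mathcal{B}')$ is a tree decomposition of the relational structure $\mathcal{A}_{G,\timeFunc}$. Connectedness for vertices is inherited directly from $(T, \mathcal{B})$. For an edge $e = uv \in E(G)$, the set $\{s : e \in \mathcal{B}'_s\}$ is the union of $\{s : u \in \mathcal{B}_s\}$ and $\{s : v \in \mathcal{B}_s\}$; both are connected subtrees and, by the tree-decomposition property applied to the edge $uv$ in $G$, they share at least one node (any bag of $\mathcal{B}$ containing both $u$ and $v$), hence their union is connected. For a time-edge $(e,t)$, its set of bags coincides with that of $e$, so connectedness follows. For the relation $\mathcal{L}$, the pair $(e,(e,t))$ lies in exactly the same bags, so any bag containing one contains both. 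For $\mathcal{R}$, if $((e_1,t_1),(e_2,t_2)) \in \mathcal{R}$ then $e_1$ and $e_2$ share a vertex $v$; any bag $\mathcal{B}_s$ containing $v$ contributes both time-edges to $\mathcal{B}'_s$, providing the required common bag.

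Finally I bound the bag size. Because every static edge incident to a vertex $v$ contributes at least one time-edge to the temporal total degree of $v$, we have $\deg_G(v) \leq d_{G,\timeFunc}(v) \leq \Delta_{G,\timeFunc}$. Hence each $v \in \mathcal{B}_s$ contributes at most $\Delta_{G,\timeFunc}$ edges and at most $\Delta_{G,\timeFunc}$ time-edges to $\mathcal{B}'_s$, so
\[
|\mathcal{B}'_s| \;\leq\; |\mathcal{B}_s| + 2 \Delta_{G,\timeFunc} \cdot |\mathcal{B}_s| \;=\; (2 \Delta_{G,\timeFunc} + 1)\,|\mathcal{B}_s| \;\leq\; (2 \Delta_{G,\timeFunc} + 1)(\tw(G)+1),
\]
which gives width at most $(2 \Delta_{G,\timeFunc} + 1)(\tw(G)+1) - 1$ and proves the lemma.

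The only delicate point is the connectedness check for edges and time-edges, where one has to remember that the two endpoint-subtrees are guaranteed to overlap precisely because the corresponding edge forces a common bag in the original decomposition; everything else is a direct counting argument.
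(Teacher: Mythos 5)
Your proposal is correct and follows essentially the same route as the paper: lift an optimal tree decomposition of $G$ by adding to each bag all edges and time-edges incident to its vertices, then bound the bag sizes via $\deg_G(v) \leq d_{G,\timeFunc}(v) \leq \Delta_{G,\timeFunc}$. In fact you spell out the connectedness and relation-coverage checks that the paper leaves as ``easy to verify,'' so no gaps remain.
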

\begin{proof}

To prove the lemma we show how to modify an optimal tree decomposition of $G$ into a desired tree decomposition of $\mathcal{A}_{G,\timeFunc}$.  
Suppose that $(T,\mathcal{B})$ is a tree decomposition of $G$ of width $\tw(G)$. 
The relational structure $\mathcal{A}_{G,\timeFunc}$ then has a tree decomposition $(T,\mathcal{B}')$, where, for every $s \in V(T)$,

\begin{align*}
	\mathcal{B}_s' = \mathcal{B}_s \cup & \bigcup_{v \in \mathcal{B}_s}\{e : e \in E(G), e \text{ is incident to } v\} \cup \\
	 & \bigcup_{v \in \mathcal{B}_s}\{(e,t) : (e,t) \in \timeEdges(G,\lambda), e \text{ is incident to } v\}.
\end{align*}

\noindent
It is clear that 
$$
|\mathcal{B}_s'| \leq |\mathcal{B}_s| + 2 \Delta_G|\mathcal{B}_s| \leq (2 \Delta_G + 1)(\tw(G) + 1)
$$
for all $s \in V(T)$, and it is easy to verify that $(T,\mathcal{B}')$ is indeed a tree decomposition 
for~$\mathcal{A}_{G,\timeFunc}$.
\end{proof}

Using this, we now provide the main result of this section.

\begin{theorem}
\label{thm:tw-fpt}
\tempEdgeDel\ admits an FPT algorithm with respect to the combined parameter of $h$, $\Delta_G$, and~$\tw(G)$.
\end{theorem}
\begin{proof}
Note that the input to \tempEdgeDel\ is a temporal graph $(G,\timeFunc)$. Note also 
that, by Theorem~\ref{thm:bodlaender93}, 
we can compute a minimum tree decomposition of any (static) graph $G$ by an FPT algorithm, parameterised by treewidth. 
Furthermore, it follows from the proof of Lemma~\ref{lem:treewidth-bound}, 
a tree decomposition of the underlying (static) graph~$G$ can be transformed in linear time 
(in the size of the temporal graph $(G,\timeFunc)$) into the tree decomposition of $\mathcal{A}_{G, \timeFunc}$. 
Therefore, since such a tree decomposition of $\mathcal{A}_{G, \timeFunc}$ can be computed in linear time overall, 
we assume here that such a decomposition is already computed.

By Lemma~\ref{lma:find-subtree}, if the temporal reachability of a vertex $u$ is greater than $h$, then $(G,\lambda)$ contains a reachable
subtree for $u$ with $h+1$ vertices. To express this property in first-order logic, we first introduce some auxiliary notation.
Let $\mathcal{S}_h$ be a fixed set of rooted trees with vertex set $[h+1]$ such that 
$\mathcal{S}_h$ contains exactly one element from every isomorphism class of rooted trees on $h+1$ vertices.
We assume that the edges of every tree $S \in \mathcal{S}_h$ are labelled by distinct numbers from $[h]$, and
we denote the label of an edge $e \in E(S)$ by $\lab(e)$. We denote by $a_i, b_i \in [h+1]$
the smallest and the largest endpoint of the edge $\lab^{-1}(i)$, respectively.
%
Given a rooted tree $S \in \mathcal{S}_h$, we define $\rho(S)$ to be the following set of pairs of edge labels

\begin{equation*}
	\begin{split}
		\Big\{ \big( \lab(e_1), \lab(e_2) \big) \colon & e_1, e_2 \in E(S), \exists v \in V(S) \\
		 & \text{ such that $e_1$ lies on the path from $v$ to the root of $S$} \text{, and $v$ is incident to } e_1, e_2 \Big\}.
	\end{split}
\end{equation*}
\normalsize

\noindent
We now define a first-order formula  expressing the property that there is some copy of $S$ in $G$ such that
all vertices in this copy are temporally reachable in $(G,\lambda)$ from the root via edges in $S$.
\begin{equation*}
	\begin{split}
		\theta(S) = &
		\Big( \exists \text{ distinct } v_1, v_2, \ldots, v_{h+1} \in V(G) \Big) 
		\Big( \exists (e_1,t_1), \ldots, (e_h,t_h) \in \timeEdges(G,\lambda) \Big) 
		\Big( \exists e_1', \ldots, e_h' \in E(G) \Big)\\
		& \bigwedge_{i=1}^h \mathcal{L}(e_i',(e_i,t_i)) \wedge \bigwedge_{i=1}^{h} \Big( \mathcal{I}(v_{a_i}, e_i') \wedge \mathcal{I}(v_{b_i}, e_i') \Big)
		\wedge 
		\bigwedge_{(i,j) \in \pairs(S)} \mathcal{R}((e_i,t_i),(e_j,t_j)).
	\end{split}
\end{equation*}

\noindent
In our modified temporal graph (that is, the graph obtained by deleting edges), the maximum temporal reachability is at most $h$ 
if and only if there is no copy $S$ of a rooted tree on $h+1$ vertices in $G$ such that all vertices of $S$ are temporally reachable in $(G,\timeFunc)$ from the root via edges in $S$.  We therefore define another formula, which captures the property that in any copy of such a tree, at least one edge must belong to the set $E$ of removed edges:

\begin{align*}
		\theta'(S,E) = &
		\Big( \forall \text{ distinct } v_1, v_2, \ldots, v_{h+1} \in V(G) \Big) 
		\Big( \forall (e_1,t_1), \ldots, (e_h,t_h) \in \timeEdges(G,\lambda) \Big) 
		\Big( \forall e_1',\ldots,e_h' \in E(G) \Big) \\
		& \qquad \Bigg[ \bigg( \bigwedge_{i=1}^h \mathcal{L}(e_i',(e_i,t_i)) \wedge\bigwedge_{i=1}^{h} \Big( \mathcal{I}(v_{a_i}, e_i') \wedge \mathcal{I}(v_{b_i}, e_i') \Big)
		\wedge 
		\bigwedge_{(i,j) \in \pairs(S)} \mathcal{R}((e_i,t_i),(e_j,t_j)) \bigg)\\
		& \qquad \qquad \qquad \implies \exists e \in E \bigg( \bigvee_{i \in [h]} \mathcal{L}(e, (e_i,t_i)) \bigg) \Bigg].
\end{align*}


\noindent
We can now define an MSO formula which is true if and only if the deletion of a given set $E$ of edges ensures
that there is no ``bad'' subtree. 
$$
\phi_h(E) = \bigwedge_{S \in \mathcal{S}_h} \theta'(S,E).
$$
Optimising to find the smallest possible set $E$ satisfying $\phi_h(E)$ is then equivalent to solving \tempEdgeDel.  
Note that the length of the formula depends only on $h$.  
The result then follows from the application of Theorem \ref{th:courcelle} to the MSO formula $\phi_h$.
\end{proof}

%

\section{A ``clocked'' generalization of temporal reachability}\label{sec:generalization}

In many applications, we might want to generalise our notion of temporal reachability: we might require that the time between arriving at and leaving any vertex on a temporal path falls within some fixed range.  
For example, in the context of disease transmission, an upper bound on the permitted time between entering and leaving a vertex might represent the time within which an infection would be detected and eliminated (thus ensuring no further transmission). 
On the other hand, a lower bound might represent the time before an individual becomes infectious or, in the case of vertices corresponding to multiple individuals in the same location (e.g. animals on a farm or humans in the same household) the minimum time individuals must spend together for there to be a non-trivial probability of disease transmission (so that one individual brings the infection to the location, but another transmits it onwards).  Motivated by this, we now define a generalized notion of temporal reachability which allows for such ``clocked'' restrictions. For the rest of the section we fix two natural numbers $\alpha$ and $\beta$ 
such that $\alpha \leq \beta$.

\begin{definition}
\label{def:alpha-beta-path}
Let $(G,\timeFunc)$ be a temporal graph.
An \emph{$(\alpha,\beta)$-temporal walk} from $u$ to $v$ in $(G,\timeFunc)$ 
is a walk $u = v_0 \ldots v_{\ell}=v$ in $G$ such that, for each $1 \leq i \leq \ell - 1$, $\alpha \leq \timeFunc(v_iv_{i+1}) - \timeFunc(v_{i-1}v_i) \leq \beta$.  An \emph{$(\alpha,\beta)$-temporal path} from $u$ to $v$ in $(G,\timeFunc)$ is an $(\alpha,\beta)$-temporal walk along which all vertices are distinct.
\end{definition}

Notice that, in this setting, the notion of reachability differs depending on whether we allow $u$ to reach $v$ via an $(\alpha,\beta)$-temporal walk or only via an $(\alpha,\beta)$-temporal path, since it is possible for there to exist an $(\alpha,\beta)$-temporal walk from $u$ to $v$ but no $(\alpha,\beta)$-temporal path; for an example, see Figure \ref{fig:walk-vs-path}.  The most appropriate choice of definition will depend upon the specific context.  When considering the spread of a disease, if each vertex corresponds to an individual and an individual becomes immune to the disease upon recovery, then we should consider reachability via temporal paths only.  On the other hand, if an individual can be infected with the same disease on multiple occasions, or if vertices in fact represent a group of individuals (e.g. the animals on a particular farm, or humans in the same household), then the notion of $(\alpha,\beta)$-temporal walks provides a more realistic model.

\begin{figure}[h]
	\centering
	\includegraphics[scale=1.5]{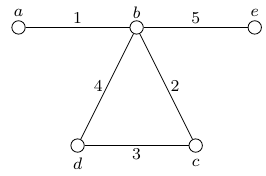}  
	\caption{\small There exists a (1,1)-temporal walk from $a$ to $e$, but there is no 
	(1,1)-temporal path between the vertices. The (1,1)-reachability set of $a$ contains all the vertices of 
	the graph, and no edge can be removed without reducing the reachability set.}
	\label{fig:walk-vs-path}
\end{figure}

Here, we focus on the latter setting, for two reasons.  Firstly, from an application perspective, we are particularly interested in the setting of livestock trade networks, in which it is to be expected that a single vertex (corresponding to a farm or similar) could be infected repeatedly as they restock with fresh animals.  Secondly, even the problem of deciding the existence or otherwise of an $(\alpha,\beta)$-temporal path between two vertices is known to be NP-complete \cite{casteigts2019waiting}, whereas $(\alpha,\beta)$-temporal walks can be found efficiently \cite{himmel2019walks}; therefore, there is much more hope of obtaining positive algorithmic results in the latter setting.

We therefore say that $v$ is $(\alpha,\beta)$-temporally reachable from $u$ \emph{starting at time $t_0$} if there exists an $(\alpha,\beta)$-temporal walk from $u$ to $v$ whose first edge $e$ satisfies $t_0 + \alpha \leq \timeFunc(e) \leq t_0 + \beta$.  
The \emph{$(\alpha,\beta)$-temporal reachability set of $u$ starting at time $t_0$} is defined in the obvious way, similarly to the classical temporal reachability (as defined in Section~\ref{sec:prelim}).  The \emph{maximum $(\alpha,\beta)$-temporal reachability} of a temporal graph $(G,\timeFunc)$ is the maximum cardinality of the $(\alpha,\beta)$-temporal reachability set of $v_0$ starting at time $t_0$, taken over all pairs $(v_0,t_0)$ with $v_0 \in V(G)$ and $t_0 \in \mathbb{N}$ where $t_0$ is at most the lifetime of the graph.

We now define the $(\alpha,\beta)$-extension of \tempEdgeDel; note that this problem clearly belongs to~NP since we can verify $(\alpha,\beta)$-temporal reachability between each pair of vertices (starting at any given time $t_0$) in polynomial time \cite{himmel2019walks}.

\vspace{0,1cm} \noindent \fbox{ 
\begin{minipage}{0.96\textwidth}
 \begin{tabular*}{\textwidth}{@{\extracolsep{\fill}}lr} \textsc{$(\alpha,\beta)$-Temporal Reachability Edge Deletion \ \ (\abTempEdgeDel)} & \\ \end{tabular*}
 
  \vspace{1.2mm}
\textbf{Input:}  A temporal graph $(G,\timeFunc)$, and $k, h \in \mathbb{N}$.\\
\textbf{Question:} Is there a set $E' \subseteq E(G)$, with $|E'| \leq k$, such that the maximum $(\alpha,\beta)$-temporal reachability of $(G,\timeFunc) \setminus E'$ is at most $h$?
\end{minipage}} \vspace{0,3cm}

%
%


All of our results -- both algorithms and intractability results -- can be generalised in a natural way to the more general setting of \abTempEdgeDel, at the cost of a slightly worse approximation factor in some cases.

%


\begin{theorem}
\label{W-hard-TR-a-b-Edge-Deletion}
For any fixed $1 \leq \alpha < \beta$, \abTempEdgeDel\ is W[1]-hard when parameterised by the maximum number $k$ of edges that can be removed, even when the input temporal graph has lifetime~$2\alpha + 1$. 
\end{theorem}
\begin{proof}
With a slight modification, the reduction of Theorem \ref{W-hard-TR-Edge-Deletion} works 
also for \abTempEdgeDel. 
Indeed, given an instance of \abTempEdgeDel, the reduced graph $(G,\timeFunc)$ is constructed exactly as one in
the proof of Theorem \ref{W-hard-TR-Edge-Deletion}, with the only difference that every time label ``1'' is replaced by $\alpha + 1$ and every time label ``2'' 
is replaced by ``$2 \alpha + 1$''. 
The proof then works verbatim for the generalized problem \abTempEdgeDel\, (note that the maximum $(\alpha,\beta)$-temporal reachability of the resulting graph will be the cardinality of the $(\alpha,\beta)$-temporal reachability set of $s$ starting at time 1).
\end{proof}

Exactly the same arguments as in the proof of Theorem \ref{W-hard-TR-a-b-Edge-Deletion} show that
the following analogue of Theorem \ref{thm:satRed} holds.

\begin{theorem}\label{thm:satRedAB}
	\abTempEdgeDel\  
	is NP-complete, 
	even if the maximum temporal reachability $h$ is at most 6,
	and the input temporal graph $(G, \timeFunc)$ has:
	\begin{enumerate}
		\item maximum degree $\Delta_G$ at most 5, and
		\item lifetime at most $2\alpha + 1$.
	\end{enumerate}
\end{theorem}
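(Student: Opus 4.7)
The plan is to recycle the \textsc{3,4-SAT} reduction from the proof of Theorem~\ref{thm:satRed} with the same minimal modification that extended Theorem~\ref{W-hard-TR-Edge-Deletion} to Theorem~\ref{W-hard-TR-a-b-Edge-Deletion}: the underlying graph $G$ and every edge of the constructed temporal graph are kept identical, but every literal edge (which was originally labelled $2$) is now relabelled with $\alpha+1$, while the edges incident to head vertices and to clause vertices keep their label $1$. With only two distinct labels in the entire construction, the resulting instance of \abTempEdgeDel\ (resp.\ \abTempEdgeLabelDel) is easy to analyze.

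The key point is that, after relabelling, each two-edge path of the form \emph{(time-$1$ edge)}\,$\to$\,\emph{(time-$(\alpha+1)$ edge)} has time gap exactly $\alpha\in[\alpha,\beta]$, so every temporal path of length $2$ present in the original construction survives as a valid $(\alpha,\beta)$-temporal path. Conversely, because the underlying graph structure is unchanged and because only two time labels appear, no new $(\alpha,\beta)$-temporal paths are introduced by the relabelling. Hence, for every vertex $v$ of the construction, $\reach^{(\alpha,\beta)}_{G,\timeFunc}(v)$ coincides exactly with $\reach_{G,\timeFunc}(v)$ as computed in the proof of Theorem~\ref{thm:satRed}.

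Given this equivalence, the reachability computations for literal vertices, head vertices, clause vertices, and satellite vertices carry over verbatim, as do the bounds $h=7$ on the maximum temporal reachability and $\Delta_{G,\timeFunc}\leq 5$ on the temporal total degree, since neither the vertex set, nor the edge set, nor the number of labels per edge has changed. The correspondence between satisfying truth assignments of $\Phi$ and sets of $n$ literal edges whose deletion reduces the maximum $(\alpha,\beta)$-reachability to at most $7$ is then identical to that established in Theorem~\ref{thm:satRed}; the same deletion set, regarded as a set of time-edges, settles \abTempEdgeLabelDel\ as well. Membership in NP is immediate since $(\alpha,\beta)$-reachability sets can be computed in polynomial time by a direct adaptation of the temporal BFS used for ordinary temporal reachability.

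The only step that genuinely requires care is the second paragraph above: confirming that the relabelling introduces no spurious $(\alpha,\beta)$-temporal paths between unrelated vertices, in particular across different gadgets or through the clause layer. This reduces to observing that every path in the construction that could conceivably be traversed in increasing time uses exactly one time-$1$ edge followed by exactly one time-$(\alpha+1)$ edge, so no path of length $\geq 3$ can be realized as an $(\alpha,\beta)$-temporal path. This structural observation is the only new technical ingredient beyond the proof of Theorem~\ref{thm:satRed}, and once it is in place the theorem follows.
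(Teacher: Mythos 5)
Your proposal is correct and follows essentially the same route as the paper, which proves this theorem by invoking the relabelling trick of Theorem~\ref{W-hard-TR-a-b-Edge-Deletion} (replace every label ``2'' in the construction of Theorem~\ref{thm:satRed} by ``$\alpha+1$'') and noting that the arguments carry over verbatim. In fact you supply more detail than the paper does, by explicitly verifying that no spurious $(\alpha,\beta)$-temporal paths arise because every temporal path in the gadget construction consists of at most one time-$1$ edge followed by at most one time-$(\alpha+1)$ edge.
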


To adapt Theorem \ref{thm:h_Approx} to the generalized setting, we need an $(\alpha,\beta)$-analogue of Lemma \ref{lma:find-subtree}; however, the minimal set of edges needed for a single vertex to reach $h$ others may no longer form a tree (see Figure \ref{fig:walk-vs-path} for an example). Such an analogue is obtained as an easy adaptation of \cite[Theorem 1]{himmel2019walks}.

\begin{lemma}[Follows from \cite{himmel2019walks}]
Let $(G,\timeFunc)$ be a temporal graph, and $h$ a positive integer.  There is an algorithm running in polynomial time which, on input $((G,\timeFunc),h)$,
\begin{enumerate}
\item if the maximum $(\alpha,\beta)$-temporal reachability of $(G,\timeFunc)$ is at most $h$, outputs ``YES'';
\item if the maximum $(\alpha,\beta)$-temporal reachability of $(G,\timeFunc)$ is greater than $h$, outputs a vertex $v_0 \in V(G)$, a time $t_0$, and a subgraph $H$ of $G$ on exactly $h + 1$ vertices such that every vertex in $H$ is $(\alpha,\beta)$-temporally reachable in $H$ from $v_0$ starting at time $t_0$.
\end{enumerate}
\end{lemma}

Since the subgraph $H$ we find in the case of a no-instance is not necessarily a tree, when imitating the proof of Theorem \ref{thm:h_Approx} we might have to delete up to $\binom{h+1}{2}$ edges in this setting, resulting in a worse approximation factor.

\begin{theorem}\label{thm:h_ApproxAB}
There exists a polynomial-time algorithm to compute an $\left(\frac{h(h+1)}{2}\right)$-approximation to \minabTempEdgeDel, where $h$ denotes the maximum permitted reachability.
\end{theorem} 

\begin{theorem}\label{thm:cutwidthAB}
There exists a polynomial-time algorithm to compute a  $c$-approximation to \minabTempEdgeDel, provided that a layout of cutwidth $c$ is given.
\end{theorem}
\begin{proof}
The proof of Theorem \ref{thm:cutwidth} applies here; it suffices to notice that we can determine the maximum $(\alpha,\beta)$-reachability of any subgraph efficiently.
\end{proof}

\begin{theorem}\label{th:inapproxAB}
	Unless $P = NP$, there exists a natural number $c$ such that for every 
	$\delta \in \left(0,\frac{1}{c+2}\right)$,
	\minabTempEdgeDel\ cannot be approximated in polynomial time to within a factor of 
	 $\delta \ln{n}$, where $n$ is the number of vertices in the input temporal graph, 
	 even if the input temporal graph has lifetime $2\alpha + 1$.
%
%
\end{theorem}
\begin{proof}
We adapt the proof of Theorem \ref{th:inapprox} by replacing every time label ``1'' with time label $\alpha + 1$ and every label ``2'' with ``$2\alpha + 1$''; the result follows immediately.
\end{proof}

\begin{theorem}
\label{thm:tw-fptAB}
\abTempEdgeDel\ 
admits an FPT algorithm with respect to the combined parameter of $h$, $\Delta_G$, and~$\tw(G)$.
\end{theorem}
\begin{proof}
The proof follows the logic of the proof of Theorem \ref{thm:tw-fpt}, but requires changes to reflect the ``clocked'' restrictions and the fact that a minimum $(\alpha, \beta)$-reachability subgraph is not necessarily a tree.

First, we replace the relation $\mathcal{R}$ by the relation $\mathcal{R}'$, 
where $((e_1,t_1), (e_2,t_2)) \in \mathcal{R}'$ if and only if $((e_1,t_1), (e_2,t_2)) \in \mathcal{R}$ and $\alpha \leq t_2 - t_1 \leq \beta$;
and introduce one more binary relation $\mathcal{P}$, where $((e_1,t_1), (e_2,t_2)) \in \mathcal{P}$ if and only if
$e_1, e_2$ share a vertex in $G$, and there exists a natural number $t_0$ such that $t_0 + \alpha \leq t_1 \leq t_0 + \beta$ and $t_0 + \alpha \leq t_2 \leq t_0 + \beta$.

Next, it follows by definition that a vertex $u$ has a $(\alpha, \beta)$-temporal reachability set of size
at least $h+1$ if and only if there exist a subgraph $H$ of $G$ on $h+1$ vertices and a natural number $t_0$ such that $u \in V(H)$ and
every vertex $v \in V(H) \setminus \{ u \}$ is $(\alpha, \beta)$-temporally reachable from $u$ starting at time $t_0$ and using only
edges of $H$. The latter means that for every vertex $v \in V(H) \setminus \{ u \}$, there exists a walk
$W_v = (x_0^v x_1^v \ldots x_{\ell_v}^v)$ in $H$ from $u$ to $v$ (with $x_0^v=u$ and $x_{\ell_v}^v = v$) such that
\begin{enumerate}
	\item[(1)] $t_0 + \alpha \leq \lambda(x_0^v x_1^v) \leq t_0 + \beta$, and
	\item[(2)] $\alpha \leq \lambda(x_i^v x_{i+1}^v) - \lambda(x_{i-1}^v x_{i}^v) \leq \beta$, for each $1 \leq i \leq \ell_v - 1$.
\end{enumerate}

To express these conditions in first-order logic, we first introduce some auxiliary notation.
Let $H$ be a connected graph with vertex set $[h+1]$ and $m$ edges that are labeled by distinct numbers from $[m]$.
We denote the label of an edge $e \in E(H)$ by $\sigma(e)$. Furthermore, we denote by $a_i, b_i \in [h+1]$ the smallest and the largest
endpoint of the edge $\sigma^{-1}(i)$, respectively.
We assume that 
for every vertex $v \in V(H) \setminus \{ u \}$, the graph $H$ contains a walk $W_v = (x_0^v x_1^v \ldots x_{\ell_v}^v)$
from $u$ to $v$ (with $x_0^v=u$ and $x_{\ell_v}^v = v$) such that the labels of the edges of the walk increase along the walk.
We will call the triple $\big( H, u, \{ W_v | v \in V(H) \setminus \{u\} \} \big)$ an \emph{$(\alpha, \beta)$-reachability template}, and 
denote by $\mathcal{D}_{h,m}$ the set of all such templates over graphs $H$ with $h+1$ vertices and $m$ edges.

We now define the first-order formula expressing the property that there is some copy of $H$ in $G$ such that all vertices
in this copy are $(\alpha, \beta)$-temporally reachable in $(G, \lambda)$ according to an $(\alpha, \beta)$-reachability template
$D = \big( H, u, \{ W_v | v \in V(H) \setminus \{u\} \} \big) \in \mathcal{D}_{h,m}$.

\begin{equation*}
	\begin{split}
		\nu(D) = &
		\Big( \exists \text{ distinct } v_1, v_2, \ldots, v_{h+1} \in V(G) \Big) 
		\Big( \exists (e_1,t_1), \ldots, (e_m,t_m) \in \timeEdges(G,\lambda) \Big) 
		\Big( \exists e_1',\ldots,e_m' \in E(G) \Big)\\
		& \bigwedge_{i=1}^m\mathcal{L}(e_i',(e_i,t_i)) \wedge \bigwedge_{i=1}^{m} \Big( \mathcal{I}(v_{a_i}, e_i') \wedge \mathcal{I}(v_{b_i}, e_i') \Big)
		\wedge \\
		& \bigwedge_{v,w \in V(H) \setminus \{u\}} \mathcal{P}\left( (e_{\sigma(x_0^v x_1^v)},t_{\sigma(x_0^v x_1^v)}),(e_{\sigma(x_0^w x_1^w)},t_{\sigma(x_0^w x_1^w)}) \right) \wedge \\
		& \bigwedge_{v \in V(H) \setminus \{u\}} \bigwedge_{i=1}^{\ell_{v}-1} 
		\mathcal{R}' \left( (e_{\sigma(x_{i-1}^v x_i^v)},t_{\sigma(x_{i-1}^v x_i^v)}),
					(e_{\sigma(x_i^v x_{i+1}^v)},t_{\sigma(x_i^v x_{i+1}^v)}) \right).
	\end{split}
\end{equation*}

\noindent
As in the case of \tempEdgeDel\, the first part of the above formula, 
$$
	\bigwedge_{i=1}^m \mathcal{L}(e_i',(e_i,t_i)),
$$
is included for technical reasons, so that we have access to edge variables corresponding to the associated time-edge pairs  in the second part of the formula,
$$
	\bigwedge_{i=1}^{m} \Big( \mathcal{I}(v_{a_i}, e_i') \wedge \mathcal{I}(v_{b_i}, e_i') \Big),
$$
which expresses the property that $H$ is a subgraph of $G$.
The third part of the formula 
$$
	\bigwedge_{v,w \in V(H) \setminus \{u\}} \mathcal{P}\left( (e_{\sigma(x_0^v x_1^v)},t_{\sigma(x_0^v x_1^v)}),
	(e_{\sigma(x_0^wx_1^w)},t_{\sigma(x_0^w x_1^w)}) \right)
$$
expresses the property that there exists a natural number $t_0$ such that for all first edges of the walks $W_v, v \in V(H) \setminus \{ u \}$
their time labels belong to the interval $[t_0 + \alpha, t_0 + \beta]$.  To see that this is indeed equivalent to the requirement, expressed by this part of the formula, that for \emph{every pair} of vertices $v,w \in V(H) \setminus \{ u \}$
there exists a natural number $t_0$ such that the time labels of the first edges of $W_v$ and $W_w$ are in the interval 
$[t_0 + \alpha, t_0 + \beta]$, consider the smallest and largest time label assigned to any first edge on a walk: if there exists $t_0$ such that both these labels belong to the interval $[t_0 + \alpha, t_0 + \beta]$ then the labels of all other first edges must also belong to this interval, as required.
Finally, the fourth part of the formula 
$$
	\bigwedge_{v \in V(H) \setminus \{u\}} \bigwedge_{i=1}^{\ell_{v}-1} 
	\mathcal{R}' \left( 
	(e_{\sigma(x_{i-1}^v x_i^v)},t_{\sigma(x_{i-1}^v x_i^v)}), (e_{\sigma(x_i^v x_{i+1}^v)},t_{\sigma(x_i^v x_{i+1}^v)}) \right)
$$
expresses the property that all walks from $u$ to $v \in V(H) \setminus \{ u \}$ are $(\alpha, \beta)$-temporal walks.

Now, similarly to the formula for \tempEdgeDel\ in Section \ref{subsec:fpt-algorithm}, we define a formula, which captures the property
that in any such copy of $H$ at least one edge must belong to the set $E$ of removed edges 

\begin{equation*}
	\begin{split}
		\nu'(D,E) = &
		\Big( \forall \text{ distinct } v_1, v_2, \ldots, v_{h+1} \in V(G) \Big) 
		\Big( \forall (e_1,t_1), \ldots, (e_m,t_m) \in \timeEdges(G,\lambda) \Big) 
		\Big( \forall e_1',\ldots,e_m' \in E(G) \Big) \\
		& \Bigg[ \Bigg( \bigwedge_{i=1}^m \mathcal{L}(e_i',(e_i,t_i)) \wedge \bigwedge_{i=1}^{m} \Big( \mathcal{I}(v_{a_i}, e_i') \wedge \mathcal{I}(v_{b_i}, e_i') \Big)
		\wedge \\
		& \bigwedge_{v,w \in V(H) \setminus \{u\}} \mathcal{P}\left( (e_{\sigma(x_0^v x_1^v)},t_{\sigma(x_0^v x_1^v)}),(e_{\sigma(x_0^w x_1^w)},t_{\sigma(x_0^w x_1^w)}) \right) \wedge \\
		& \bigwedge_{v \in V(H) \setminus \{u\}} \bigwedge_{i=1}^{\ell_{v}-1} 
		\mathcal{R}' \left( (e_{\sigma(x_{i-1}^v x_i^v)},t_{\sigma(x_{i-1}^v x_i^v)}),
					(e_{\sigma(x_i^v x_{i+1}^v)},t_{\sigma(x_i^v x_{i+1}^v)}) \right) \Bigg) \\
		&\implies \exists e \in E \left( \bigvee_{i \in [m]} \mathcal{L}(e, (e_i,t_i)) \right) \Bigg].
	\end{split}
\end{equation*}

Finally, we define an MSO formula which is true if and only if the deletion of a given set $E$ of edges ensures that there is no ``bad'' subgraph. 
$$
	\phi_h(E) = \bigwedge_{m = h}^{{h+1 \choose 2}}\bigwedge_{D \in \mathcal{D}_{h,m}} \nu'(D, E).
$$

\noindent
Similarly to the proof of Theorem \ref{thm:tw-fpt}, optimising to find the smallest possible set $E$ satisfying 
$\phi_h(E)$ is then equivalent to solving \abTempEdgeDel.  
Note that the length of the formula depends only on $h$.  
The result then follows from the application of Theorem \ref{th:courcelle} to the MSO formula $\phi_h$.
\end{proof}

\section{Conclusions and open problems}\label{sec:conclusion}


In this paper we studied the problem \tempEdgeDel\, of removing a small number of \emph{edges} from a given \emph{temporal graph} (i.e.~a graph that changes over time)
to ensure that every vertex has a temporal path to fewer than $h$ other vertices.  
The main motivation for this problem comes from the need to limit the spread of disease over a network, for example in a livestock trade network in which farms are represented by vertices and the edges encode trades of animals between farms \cite{mitchell2005characteristics,enright2018deleting}.
Further motivation for the problem of removing edges to limit the 
temporal connectivity of a temporal graph comes from scenarios of sensitive information propagation 
through rumor-spreading.
In practical applications, removing an edge would correspond to completely prohibiting any contact between 
two entities, while removing an edge availability at time $t$ would correspond to just temporally restricting their contact at that time point. Motivated by these applications, we also considered a ``clocked'' generalisation of the problem in which transmission of disease (or information) to a vertex's neighbours can only occur in some specified time-window after the vertex is itself infected.

We showed that both problems remain NP-complete even when strong restrictions are placed on the input.  More specifically, the problems are para-NP-hard with respect to the combination of the three parameters $h$ (the maximum permitted reachability), the maximum degree $\Delta_G$ of $G$, and lifetime of $(G,\lambda)$; they are  also W[1]-hard parameterised by the number $k$ of permitted deletions.  Moreover, with respect to this last parameterisation, we cannot improve significantly on a brute force approach unless the Exponential Time Hypothesis fails.  On the positive side, we proved that these problems admit fixed-parameter tractable algorithms with respect to the combination of three parameters: 
the treewidth $\tw(G)$ of the underlying graph $G$, the maximum allowed temporal reachability~$h$, and the maximum degree $\Delta_G$ of $(G,\timeFunc)$. It remains open whether \tempEdgeDel\, is in FPT parameterised simultaneously by the treewidth and either one of $\Delta_G$ and $h$.

We also considered approximation algorithms for the optimisation version of \tempEdgeDel\, in which the goal is to find a minimum-size set of edges to delete.  We demonstrated that an $h$-approximation can be found in polynomial time on arbitrary graphs, and a constant factor polynomial approximation is possible on graphs of bounded cutwidth.  However, we also showed that there is unlikely to be a polynomial-time algorithm to compute any constant-factor approximation in general, even on temporal graphs of lifetime two.  Nevertheless, a natural open question is whether we can improve the approximation ratio for general graphs.

\vskip5ex

\noindent
\textbf{Acknowledgements.}
The authors wish to thank Bruno Courcelle and Barnaby Martin for useful discussions and hints on monadic second order logic. We are grateful to the anonymous referees for their thorough reading of the paper and the insightful comments and suggestions, which considerably improved the presentation of the paper.



\bibliography{temporal-deletion-bib}

\begin{thebibliography}{10}

\bibitem{krizanc1}
Eric Aaron, Danny Krizanc, and Elliot Meyerson.
\newblock {DMVP:} foremost waypoint coverage of time-varying graphs.
\newblock In {\em Proceedings of the 40th International Workshop on
  Graph-Theoretic Concepts in Computer Science ({WG})}, volume 8747 of {\em
  Lecture Notes in Computer Science}, pages 29--41. Springer, 2014.

\bibitem{akrida}
Eleni~C. Akrida, Leszek Gasieniec, George~B. Mertzios, and Paul~G. Spirakis.
\newblock Ephemeral networks with random availability of links: The case of
  fast networks.
\newblock {\em Journal of Parallel and Distributed Computing}, 87:109--120,
  2016.

\bibitem{akridaTOCS}
Eleni~C. Akrida, Leszek Gasieniec, George~B. Mertzios, and Paul~G. Spirakis.
\newblock The complexity of optimal design of temporally connected graphs.
\newblock {\em Theory of Computing Systems}, 61(3):907--944, 2017.

\bibitem{AkridaMNRSZ19}
Eleni~C. Akrida, George~B. Mertzios, Sotiris~E. Nikoletseas, Christoforos~L.
  Raptopoulos, Paul~G. Spirakis, and Viktor Zamaraev.
\newblock How fast can we reach a target vertex in stochastic temporal graphs?
\newblock In {\em Proceedings of the 46th International Colloquium on Automata,
  Languages, and Programming, ({ICALP})}, volume 132 of {\em LIPIcs}, pages
  131:1--131:14. Schloss Dagstuhl - Leibniz-Zentrum f{\"{u}}r Informatik, 2019.

\bibitem{akrida2020temporal}
Eleni~C Akrida, George~B Mertzios, Paul~G Spirakis, and Viktor Zamaraev.
\newblock Temporal vertex cover with a sliding time window.
\newblock {\em Journal of Computer and System Sciences}, 107:108--123, 2020.

\bibitem{arnborg87}
Stefan Arnborg, Derek~G. Corneil, and Andrzej Proskurowski.
\newblock Complexity of finding embeddings in a $k$-tree.
\newblock {\em SIAM Journal on Algebraic Discrete Methods}, 8(2):277---284,
  1987.

\bibitem{barratBook}
Alain Barrat, Marc Barthlemy, and Alessandro Vespignani.
\newblock {\em Dynamical Processes on Complex Networks}.
\newblock Cambridge University Press, New York, NY, USA, 1st edition, 2008.

\bibitem{bodlaender93}
Hans~L. Bodlaender.
\newblock A linear-time algorithm for finding tree-decompositions of small
  treewidth.
\newblock In {\em Proceedings of the 25th Annual {ACM} Symposium on Theory of
  Computing ({STOC})}, pages 226--234, 1993.

\bibitem{braunstein}
Alfredo Braunstein and Alessandro Ingrosso.
\newblock Inference of causality in epidemics on temporal contact networks.
\newblock {\em Scientific Reports}, 6:27538, 2016.
\newblock URL: \url{http://www.ncbi.nlm.nih.gov/pmc/articles/PMC4901330/},
  \href {https://doi.org/10.1038/srep27538} {\path{doi:10.1038/srep27538}}.

\bibitem{brockmann}
Dirk Brockmann and Dirk Helbing.
\newblock The hidden geometry of complex, network-driven contagion phenomena.
\newblock {\em Science}, 342(6164):1337--1342, 2013.

\bibitem{xuan}
Binh-Minh Bui-Xuan, Afonso Ferreira, and Aubin Jarry.
\newblock Computing shortest, fastest, and foremost journeys in dynamic
  networks.
\newblock {\em International Journal of Foundations of Computer Science},
  14(2):267--285, 2003.

\bibitem{flocchini1}
Arnaud Casteigts and Paola Flocchini.
\newblock {Deterministic Algorithms in Dynamic Networks: Formal Models and
  Metrics}.
\newblock Technical report, Defence R\&D Canada, April 2013.
\newblock URL: \url{https://hal.archives-ouvertes.fr/hal-00865762}.

\bibitem{flocchini2}
Arnaud Casteigts and Paola Flocchini.
\newblock {Deterministic Algorithms in Dynamic Networks: Problems, Analysis,
  and Algorithmic Tools}.
\newblock Technical report, Defence R\&D Canada, April 2013.
\newblock URL: \url{https://hal.archives-ouvertes.fr/hal-00865764}.

\bibitem{CasteigtsFloccini12}
Arnaud Casteigts, Paola Flocchini, Walter Quattrociocchi, and Nicola Santoro.
\newblock Time-varying graphs and dynamic networks.
\newblock {\em International Journal of Parallel, Emergent and Distributed
  Systems (IJPEDS)}, 27(5):387--408, 2012.

\bibitem{casteigts2019waiting}
Arnaud Casteigts, Anne{-}Sophie Himmel, Hendrik Molter, and Philipp Zschoche.
\newblock The computational complexity of finding temporal paths under waiting
  time constraints.
\newblock {\em arXiv preprint arXiv:1909.06437}, 2019.

\bibitem{chen2005tight}
Jianer Chen, Benny Chor, Mike Fellows, Xiuzhen Huang, David Juedes, Iyad~A
  Kanj, and Ge~Xia.
\newblock Tight lower bounds for certain parameterized {NP}-hard problems.
\newblock {\em Information and Computation}, 201(2):216--231, 2005.

\bibitem{clementi}
Andrea E.~F. Clementi, Claudio Macci, Angelo Monti, Francesco Pasquale, and
  Riccardo Silvestri.
\newblock Flooding time of edge-markovian evolving graphs.
\newblock {\em SIAM Journal on Discrete Mathematics (SIDMA)}, 24(4):1694--1712,
  2010.

\bibitem{colizza2015}
Vittoria Colizza, Alain Barrat, Marc Barth{\'e}lemy, and Alessandro Vespignani.
\newblock The role of the airline transportation network in the prediction and
  predictability of global epidemics.
\newblock {\em Proceedings of the National Academy of Sciences},
  103(7):2015--2020, 2006.

\bibitem{CE12}
Bruno Courcelle and Joost Engelfriet.
\newblock {\em Graph structure and monadic second-order logic: a
  language-theoretic approach}.
\newblock Cambridge University Press, 2012.

\bibitem{dinur2014analytical}
Irit Dinur and David Steurer.
\newblock Analytical approach to parallel repetition.
\newblock In {\em Proceedings of the 46th annual ACM Symposium on Theory of
  Computing ({STOC})}, pages 624--633. ACM, 2014.

\bibitem{enright2018deleting}
Jessica Enright and Kitty Meeks.
\newblock Deleting edges to restrict the size of an epidemic: a new application
  for treewidth.
\newblock {\em Algorithmica}, 80(6):1857--1889, 2018.

\bibitem{EnrightMMZ19}
Jessica Enright, Kitty Meeks, George~B. Mertzios, and Viktor Zamaraev.
\newblock Deleting edges to restrict the size of an epidemic in temporal
  networks.
\newblock In {\em Proceedings of the 44th International Symposium on
  Mathematical Foundations of Computer Science ({MFCS})}, volume 138 of {\em
  LIPIcs}, pages 57:1--57:15. Schloss Dagstuhl - Leibniz-Zentrum f{\"{u}}r
  Informatik, 2019.

\bibitem{enright2019changing}
Jessica Enright, Kitty Meeks, and Fiona Skerman.
\newblock Assigning times to minimise reachability in temporal graphs, 2021.
\newblock \href {https://doi.org/https://doi.org/10.1016/j.jcss.2020.08.001}
  {\path{doi:https://doi.org/10.1016/j.jcss.2020.08.001}}.

\bibitem{erlebach}
Thomas Erlebach, Michael Hoffmann, and Frank Kammer.
\newblock On temporal graph exploration.
\newblock In {\em Proceedings of the 42nd International Colloquium on Automata,
  Languages, and Programming ({ICALP})}, volume 9134 of {\em Lecture Notes in
  Computer Science}, pages 444--455. Springer, 2015.

\bibitem{Ferreira-MANETS-04}
Afonso Ferreira.
\newblock Building a reference combinatorial model for {MANET}s.
\newblock {\em IEEE Network}, 18(5):24--29, 2004.

\bibitem{firefighter}
Stephen Finbow and Gary MacGillivray.
\newblock The firefighter problem: a survey of results, directions and
  questions.
\newblock {\em Australasian J. Combinatorics}, 43:57--78, 2009.

\bibitem{FlocchiniMS09}
Paola Flocchini, Bernard Mans, and Nicola Santoro.
\newblock Exploration of periodically varying graphs.
\newblock In {\em Proceedings of the 20th International Symposium on Algorithms
  and Computation ({ISAAC})}, volume 5878 of {\em Lecture Notes in Computer
  Science}, pages 534--543. Springer, 2009.

\bibitem{flumgrohe}
J{\"o}rg Flum and Martin Grohe.
\newblock {\em Parameterized Complexity Theory}.
\newblock Springer, 2006.

\bibitem{fluschnik2019temporal}
Till Fluschnik, Hendrik Molter, Rolf Niedermeier, Malte Renken, and Philipp
  Zschoche.
\newblock Temporal graph classes: A view through temporal separators.
\newblock {\em Theoretical Computer Science}, 2019.

\bibitem{GiakkoupisSS14}
George Giakkoupis, Thomas Sauerwald, and Alexandre Stauffer.
\newblock Randomized rumor spreading in dynamic graphs.
\newblock In {\em Proceedings of the 41st International Colloquium on Automata,
  Languages and Programming ({ICALP})}, volume 8573 of {\em Lecture Notes in
  Computer Science}, pages 495--507. Springer, 2014.

\bibitem{fmdAftermath}
Daniel~T. Haydon, Rowland~R. Kao, and Paul~R. Kitching.
\newblock The {UK} foot-and-mouth disease outbreak---the aftermath.
\newblock {\em Nature Reviews Microbiology}, 2(8):675, 2004.

\bibitem{rumours}
Itai Himelboim, Marc~A. Smith, Lee Rainie, Ben Shneiderman, and Camila Espina.
\newblock Classifying twitter topic-networks using social network analysis.
\newblock {\em Social Media + Society}, 3(1):2056305117691545, 2017.

\bibitem{himmel2019walks}
Anne-Sophie Himmel, Matthias Bentert, Andr{\'e} Nichterlein, and Rolf
  Niedermeier.
\newblock Efficient computation of optimal temporal walks under waiting-time
  constraints.
\newblock In Hocine Cherifi, Sabrina Gaito, Jos{\'e}~Fernendo Mendes, Esteban
  Moro, and Luis~Mateus Rocha, editors, {\em Complex Networks and Their
  Applications VIII}, pages 494--506, Cham, 2020. Springer International
  Publishing.

\bibitem{neidermeier}
Anne{-}Sophie Himmel, Hendrik Molter, Rolf Niedermeier, and Manuel Sorge.
\newblock Adapting the {B}ron-{K}erbosch algorithm for enumerating maximal
  cliques in temporal graphs.
\newblock {\em Social Network Analysis and Mining}, 7(1):35:1--35:16, 2017.

\bibitem{Holme-Saramaki-book-13}
Petter Holme and Jari Saram{\"a}ki, editors.
\newblock {\em Temporal Networks}.
\newblock Springer, 2013.

\bibitem{kempe}
David Kempe, Jon~M. Kleinberg, and Amit Kumar.
\newblock Connectivity and inference problems for temporal networks.
\newblock In {\em Proceedings of the 32nd annual ACM symposium on Theory of
  computing (STOC)}, pages 504--513, 2000.

\bibitem{Leskovec-Kleinberg-Faloutsos07}
Jure Leskovec, Jon~M. Kleinberg, and Christos Faloutsos.
\newblock Graph evolution: Densification and shrinking diameters.
\newblock {\em ACM Transactions on Knowledge Discovery from Data}, 1(1), 2007.

\bibitem{snapnets}
Jure Leskovec and Andrej Krevl.
\newblock {SNAP Datasets}: {Stanford} large network dataset collection.
\newblock \url{http://snap.stanford.edu/data}, June 2014.

\bibitem{MertziosMCS13}
George~B. Mertzios, Othon Michail, Ioannis Chatzigiannakis, and Paul~G.
  Spirakis.
\newblock Temporal network optimization subject to connectivity constraints.
\newblock In {\em Proceedings of the 40th International Colloquium on Automata,
  Languages and Programming (ICALP)}, volume 7966 of {\em Lecture Notes in
  Computer Science}, pages 657--668. Springer, 2013.

\bibitem{mertzios2019computing}
George~B Mertzios, Hendrik Molter, Rolf Niedermeier, Viktor Zamaraev, and
  Philipp Zschoche.
\newblock Computing maximum matchings in temporal graphs.
\newblock In {\em Proceedings of the 37th International Symposium on
  Theoretical Aspects of Computer Science ({STACS})}, volume 154 of {\em
  LIPIcs}, pages 27:1--27:14. Schloss Dagstuhl - Leibniz-Zentrum f{\"{u}}r
  Informatik, 2020.

\bibitem{mertzios2019sliding}
George~B Mertzios, Hendrik Molter, and Viktor Zamaraev.
\newblock Sliding window temporal graph coloring.
\newblock In {\em Proceedings of the AAAI Conference on Artificial
  Intelligence}, volume~33, pages 7667--7674, 2019.

\bibitem{michailTSP}
Othon Michail and Paul~G. Spirakis.
\newblock Traveling salesman problems in temporal graphs.
\newblock {\em Theoretical Computer Science}, 634:1--23, 2016.

\bibitem{michailCACM}
Othon Michail and Paul~G. Spirakis.
\newblock Elements of the theory of dynamic networks.
\newblock {\em Communications of the ACM}, 61(2):72--72, 2018.

\bibitem{cts}
A.~Mitchell, D.~Bourn, J.~Mawdsley, W.~Wint, R.~Clifton-Hadley, and M.~Gilbert.
\newblock Characteristics of cattle movements in {B}ritain -- an analysis of
  records from the cattle tracing system.
\newblock {\em Animal Science}, 80(3):265--273, 2005.

\bibitem{mitchell2005characteristics}
Andrew Mitchell, David Bourn, J.~Mawdsley, William Wint, Richard
  Clifton-Hadley, and Marius Gilbert.
\newblock Characteristics of cattle movements in {B}ritain -- an analysis of
  records from the cattle tracing system.
\newblock {\em Animal Science}, 80(3):265--273, 2005.

\bibitem{Moshkovitz12}
Dana Moshkovitz.
\newblock The projection games conjecture and the {NP}-hardness of ln
  n-approximating set-cover.
\newblock In {\em Proceedings of the 15th International Workshop on
  Approximation, Randomization, and Combinatorial Optimization. Algorithms and
  Techniques ({APPROX-RANDOM 2012})}, pages 276--287, 2012.

\bibitem{noremarkWidgren}
Maria Noremark and Stefan Widgren.
\newblock {E}pi{C}ontact{T}race: an {R}-package for contact tracing during
  livestock disease outbreaks and for risk-based surveillance.
\newblock {\em BMC Veterinary Research}, 10(1), 2014.

\bibitem{copenhagen}
Piotr Sapiezynski, Arkadiusz Stopczynski, Radu Gatej, and Sune Lehmann.
\newblock Tracking human mobility using wifi signals.
\newblock {\em PloS One}, 10(7):e0130824, 2015.

\bibitem{schrijver03}
Alexander Schrijver.
\newblock {\em Combinatorial Optimization: Polyhedra and Efficiency}.
\newblock Springer-Verlag Berlin Heidelberg, 2003.

\bibitem{TangMML10-ACM}
John~Kit Tang, Mirco Musolesi, Cecilia Mascolo, and Vito Latora.
\newblock Characterising temporal distance and reachability in mobile and
  online social networks.
\newblock {\em ACM Computer Communication Review}, 40(1):118--124, 2010.

\bibitem{thilikosSernaBodlaender_cutwidth}
Dimitrios~M. Thilikos, Maria Serna, and Hans~L. Bodlaender.
\newblock Cutwidth {I}: A linear time fixed parameter algorithm.
\newblock {\em Journal of Algorithms}, 56(1):1 -- 24, 2005.

\bibitem{Tovey84}
Craig~A. Tovey.
\newblock A simplified {NP}-complete satisfiability problem.
\newblock {\em Discrete Applied Mathematics}, 8(1):85--89, 1984.

\bibitem{vittoriaPRX}
Eugenio Valdano, Luca Ferreri, Chiara Poletto, and Vittoria Colizza.
\newblock Analytical computation of the epidemic threshold on temporal
  networks.
\newblock {\em Phys. Rev. X}, 5:021005, Apr 2015.

\bibitem{valdanoEpi}
Eugenio Valdano, Chiara Poletto, Armando Giovannini, Diana Palma, Lara Savini,
  and Vittoria Colizza.
\newblock Predicting epidemic risk from past temporal contact data.
\newblock {\em PLoS Computational Biology}, 11(3):e1004152, 03 2015.

\bibitem{viardClique}
Jordan Viard, Matthieu Latapy, and Cl{\'{e}}mence Magnien.
\newblock Revealing contact patterns among high-school students using maximal
  cliques in link streams.
\newblock In {\em Proceedings of the 2015 {IEEE/ACM} International Conference
  on Advances in Social Networks Analysis and Mining ({ASONAM})}, pages
  1517--1522, 2015.

\bibitem{viardCliqueTCS}
Tiphaine Viard, Matthieu Latapy, and Cl{\'{e}}mence Magnien.
\newblock Computing maximal cliques in link streams.
\newblock {\em Theoretical Computer Science}, 609:245--252, 2016.

\bibitem{zschoche2017efficiently}
Philipp Zschoche, Till Fluschnik, Hendrik Molter, and Rolf Niedermeier.
\newblock The complexity of finding small separators in temporal graphs.
\newblock {\em Journal of Computer and System Sciences}, 107:72--92, 2020.

\end{thebibliography}

\end{document}